\documentclass[12pt]{amsart}

\usepackage{fullpage}
\usepackage{amsmath, amsthm, amsfonts}
\usepackage{color}
\usepackage{parskip}
\usepackage{graphicx}

\usepackage{pdfpages}
\usepackage{tikz}
\usetikzlibrary{arrows}

\newtheorem{theorem}{Theorem}[section]
\newtheorem{corollary}[theorem]{Corollary}
\newtheorem{lemma}[theorem]{Lemma}

\title{The Spread of Voting Attitudes in Social Networks}

\date{}
\begin{document}
\begin{center}
\maketitle
	
	 Jordan Barrett$^{A}$ \quad \quad Christopher Duffy$^{B,}$\footnote{Corresponding Author: christopher.duffy@usask.ca}$^{,2}$ \quad \quad Richard Nowakowski$^{C,}\footnote{Research supported by the Natural Science and Engineering Research Council of Canada}$
	 
	 	\vspace{.15in}
	 \begin{small}
	 $^{A}$Department of Mathematics and Statistics, McGill University, Montreal, CANADA\\
	 $^{B}$Department of Mathematics and Statistics, University of Saskatchewan, Saskatoon, CANADA\\
	 $^{C}$Department of Mathematics and Statistics,  Dalhousie University, Halifax, CANADA\\
	\end{small}

\end{center}

\begin{abstract}
	The Shapley-Shubik power index is a measure of each voters power in the passage or failure of a vote. We extend this measure to graphs and consider a discrete-time process in which voters may change their vote based on the outcome of the previous vote. 
	We use this model to study how voter influence can spread through a network.
	We find conditions under which a vanishingly small portion of consenting voters can change the votes of the entirety of the network.
	For a particular family of graphs, this process can be modelled using cellular automata. 
	In particular, we find a connection between this process and the well-studied cellular automata, Rule 90.
	We use this connection to show that such processes can exhibit arbitrarily-long periodicity.
\end{abstract}

\textbf{Keywords:}  Discrete-time Graph Process; Cellular Automata\\

\section{Introduction and Definitions}

Network effects heavily impact the spread of ideas and opinions among a population.
And so discrete-time processes on graphs provide an ideal tool for studying such a spread.
A wide variety of updating mechanisms, both deterministic and stochastic, have been proposed to model the evolution of attitudes of vertices in a graph \cite{A14,B07,D83}.
Particular attention has been paid to such models that consider the outcome of local voting procedures as part of the updating mechanic \cite{C18,C16,C89,H01}, usually with the aim of computing the expected time to consensus.
In addition to its possible sociological applications, these evolving systems are important topic in distributed computing \cite{P02}.
In this work we study a fundamental tool in the study of voting systems as an update mechanism for such a discrete-time process. 

For votes passing by a simple majority, one can intuit that votes that pass by a slim majority feature consenting voters that have  more power than consenting voters in votes that pass by a wide majority.
The Shapley-Shubik power index is a measure of each voters power in the passage or failure of a vote.
We extend this measure of voter power to graphs and use it to examine the discrete-time spread of voter attitudes in a network.

Shapley and Shubik introduced their index as a first examination of the problem of developing and maintaining a legislative body \cite{S54}. 
They noted that revisions to the structure of a legislative body may include new forms of bias, unintended by the revisers. 
Their mathematical evaluation of the division of power within a legislative body is a tool to examine overall fairness in the system, where fairness can include properties such as equal representation, and protection of minority interests.
Their tool has since been used to study bodies as varied as the Council of Ministers of the European Council \cite{W94}, the US Electoral College \cite{B01}, the Polish government \cite{M09}.

In a voting situation, the \textit{Shapley-Shubik power index} is applicable when votes are taken in order, a roll
call for example. 
A person, $I$, is \textit{pivotal} if before their vote the motion had not passed but when they vote the motion is carried. 
Formally, if there are $n$ voters and $\pi_I$ is the number of permutations in which $I$ is pivotal then the
\textit{power} of $I$ is $p(I)=\pi_I/n!$. 

To extend the Shapley-Shubik power index to a discrete-time process on a graph we require the following definitions. 
Let $G$ be a graph. 
A \emph{configuration} of $G$, $C: V(G) \to \{C,D\}$ assigns each vertex to be either a collaborator $(C)$ or a defector $(D)$. 
We refer to $C(v)$ as the \emph{strategy of $v$}. 
If $C$ is a configuration and $v \in V(G)$ the \emph{collaborator neighbourhood} of $v$, denoted $N_C[v]$ is the set of vertices in the closed neighbourhood of $v$ that are collaborators. 
We define the \emph{defector neighbourhood}, denoted $N_D[v]$, analogously.

Let $C$ be a configuration of $G$, $v \in V(G)$ and $w \in \left[\frac{1}{2},1\right)$. 
If $v$ is a collaborator (i.e, if $C(v) = C$), then the \emph{power of $v$}  is given by $p(v)= \frac{1}{|N_C[v]|}$ when $\frac{|N_C[v]|}{|N[v]|} > w$, otherwise it is $0$. 
If $v$ is a defector then $p(v) = \frac{1}{|N_D[v]|}$ when $\frac{|N_C[v]|}{|N[v]|} \leq w$, and $0$ otherwise. 
We may interpret  $w$ as the proportion of dissenting members needed to defeat the vote. 
We refer to $w$ as the \emph{win condition}.
Notice that in each case, voters on the losing side have no power.
For standard graph theoretic notations we refer the reader to \cite{Bondy08}.

Consider the following example of five vertices where each vote passes by a simple majority (i..e, $w = 1/2$). 
In Figure \ref{fig:example1}, and in subsequent figures, black vertices denote collaborators and white vertices denote defectors. 

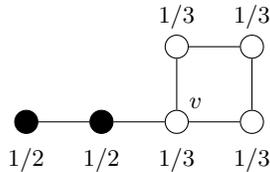
\begin{figure}
\[
\begin{scriptsize}
\begin{tikzpicture}
[scale = 1]
\tikzset{vertex/.style = {shape=circle,draw}}

\node[vertex,fill=black] (00) at (0,0) {};
\node[vertex,fill=black] (10) at (1,0) {};
\node[vertex] (20) at (2,0) {};
\node[vertex] (30) at (3,0) {};
\node[vertex] (21) at (2,1) {};
\node[vertex] (31) at (3,1) {};

\node (L00) at (0,-0.5) {$1/2$};
\node (L10) at (1,-0.5) {$1/2$};
\node (L20) at (2,-0.5) {$1/3$};
\node (L30) at (3,-0.5) {$1/3$};
\node (L21) at (2,1.4) {$1/3$};
\node (L31) at (3,1.4) {$1/3$};

\node (v20) at (2.25,0.25) {$v$};

\draw (00) -- (10) -- (20) -- (30) -- (31) -- (21) -- (20);

\end{tikzpicture}
\end{scriptsize}
\]
	\caption{Computing the power of a vertex}
	\label{fig:example1}	
\end{figure}

Vertex $v$ is a collaborator. If a simple majority vote was taken across the vertices in $v$'s closed neighbourhood, then the vote would pass with $3$ votes for and $1$ vote against.
Therefore $p(v) = \frac{1}{3}$.

For a fixed graph $G$, a configuration $C_0$ and  $w \in \left[\frac{1}{2},1\right)$ the $w$-power index process is a discrete-time process defined as follows.
In each round $t > 0$, each vertex takes the strategy of the vertex in its closed neighbourhood that had the greatest power at the end of the previous round. 
If this strategy is not well-defined, i.e., if there are two vertices with the greatest power and differing strategies,  then the vertex retains its strategy. 
Each vertex performs this update simultaneously.  
We refer to the configuration at the end of round $i$ as $C_i$ and to $C_0$ as the \emph{initial configuration}. 
Taking the configuration in Figure \ref{fig:example1} as $C_0$, Figure~\ref{fig:example2}  gives the configuration $C_1$ and the resulting powers.
Notice that in $C_0$ vertex $v$ had $p(v) = \frac{1}{3}$.
This vertex has a defector neighbour with $p = \frac{1}{2}$, and so $v$ has changed their strategy from collaboration to defection.

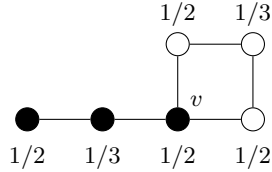
\begin{figure}[h]
\[
\begin{scriptsize}
\begin{tikzpicture}
[scale = 1]
\tikzset{vertex/.style = {shape=circle,draw}}

\node[vertex,fill=black] (00) at (0,0) {};
\node[vertex,fill=black] (10) at (1,0) {};
\node[vertex,fill=black] (20) at (2,0) {};
\node[vertex] (30) at (3,0) {};
\node[vertex] (21) at (2,1) {};
\node[vertex] (31) at (3,1) {};

\node (L00) at (0,-0.5) {$1/2$};
\node (L10) at (1,-0.5) {$1/3$};
\node (L20) at (2,-0.5) {$1/2$};
\node (L30) at (3,-0.5) {$1/2$};
\node (L21) at (2,1.4) {$1/2$};
\node (L31) at (3,1.4) {$1/3$};

\node (v20) at (2.25,0.25) {$v$};

\draw (00) -- (10) -- (20) -- (30) -- (31) -- (21) -- (20);

\end{tikzpicture}
\end{scriptsize}
\]
	\caption{The process after a single time-step}	
	\label{fig:example2}

\end{figure}

We continue with a second example to highlight the cyclic nature of this process.
Figure \ref{fig:cyclExample} gives the evolution of the process for the given initial configuration.

\begin{figure}
\[
\begin{scriptsize}
\begin{tikzpicture}
[scale = 1]
\tikzset{vertex/.style = {shape=circle,draw}}

\node[vertex] (00) at (0,-1) {};
\node[vertex] (01) at (0,0) {};
\node[vertex] (105) at (1,-0.5) {};
\node[vertex,fill=black] (20) at (2,-1) {};
\node[vertex,fill=black] (21) at (2,0) {};

\draw (00) -- (01) -- (105) -- (00);
\draw (20) -- (21) -- (105) -- (20);

\end{tikzpicture}
\end{scriptsize}
\ \ \ \ \phantom{\rightarrow} \ \ \ \ \ \ \ 
\begin{scriptsize}
\begin{tikzpicture}
[scale = 1]
\tikzset{vertex/.style = {shape=circle,draw}}

\node[vertex] (00) at (0,0) {};
\node[vertex] (01) at (0,1) {};
\node[vertex,fill=black] (105) at (1,0.5) {};
\node[vertex,fill=black] (20) at (2,0) {};
\node[vertex,fill=black] (21) at (2,1) {};

\draw (00) -- (01) -- (105) -- (00);
\draw (20) -- (21) -- (105) -- (20);

\end{tikzpicture}
\end{scriptsize}
\ \ \ \ \phantom{\rightarrow} \ \ \ \ \ \ \ 
\begin{scriptsize}
\begin{tikzpicture}
[scale = 1]
\tikzset{vertex/.style = {shape=circle,draw}}

\node[vertex] (00) at (0,0) {};
\node[vertex] (01) at (0,1) {};
\node[vertex] (105) at (1,0.5) {};
\node[vertex,fill=black] (20) at (2,0) {};
\node[vertex,fill=black] (21) at (2,1) {};

\draw (00) -- (01) -- (105) -- (00);
\draw (20) -- (21) -- (105) -- (20);

\end{tikzpicture}
\end{scriptsize}
\]
\[
C_0
\ \ \ \ \ \ \ \ \ \ \ \ \ \ \ \ \ \ \ \ \ \ \ \ \ \ \ \
C_1
\ \ \ \ \ \ \ \ \ \ \ \ \ \ \ \ \ \ \ \ \ \ \ \ \ \ \ \ \
C_2
\]
	\caption{A  $\frac{1}{2}$-power index process that is periodic with period $2$}	
	\label{fig:cyclExample}

\end{figure}
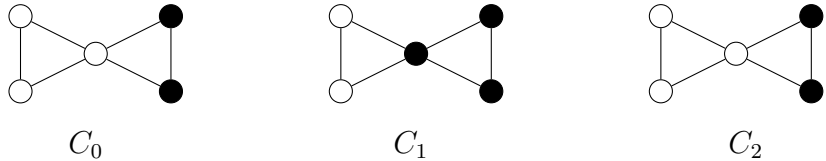

Here we notice $C_0 = C_2$.
As the process is deterministic, this implies $C_0 = C_2 = C_4 \dots$ and $C_1 = C_3 = C_5 \dots$.
For a fixed graph $G$, the evolution of the process depends only on $C_0$ and $w$.
With respect to a fixed $w\geq \frac{1}{2}$, for an initial configuration $C_0$ we say that the process becomes \emph{stable} if there exists $i\geq 0$ such that $C_i = C_{i+1}$. 
Alternatively the process becomes \emph{periodic with period $\ell$} if $\ell > 1$ is the least integer so that there exists $C_i$ where $C_i = C_{i + \ell}$ for some $i \geq 0$.
For the example in Figures \ref{fig:example1} and \ref{fig:example2}  the process becomes stable; as one can check $C_1 = C_2$.

For a fixed graph $G$, an initial configuration $C_0$ and a win condition $w \geq \frac{1}{2}$ the state of a vertex $v$ at time $t$ depends on the states of the vertices in the closed second neighbourhood of $v$. 
As such, we may consider the triple $(G,C_0,w)$ an as instance of a cellular automata on a graph with a particular rule set.
Following the preliminary results in Section \ref{sec:prelim}, we study the interplay between periodicity,  win condition and graph structure. 
For a particular class of cartesian products of graphs, we show that the evolution a particular triple $(G,C_0,w)$ follows exactly the evolution of a well-studied 1D cellular automaton, Rule 90 \cite{W84}.
This implies the existence of processes with arbitrarily long periods.
We further find that for every positive integer $k$ and any connected graph $H$, that there exists a graph $G$, of which $H$ is an induced subgraph, and a configuration $C_0$ such that the $\frac{1}{2}$-power index process on $C_0$ eventually cycles with cycle length at least $k$.

\section{Preliminary Results} \label{sec:prelim}
Of the configurations that become stable, we identify a particular class of those that stabilise such that every vertex has the same strategy -- the vertices reach a consensus.
If $C_0$ results in stable configuration $C_i$ such that each vertex in $C_i$ is a collaborator (defector), then we say that $C_0$ is \emph{collaborator dominant} (\emph{defector dominant}). 
Surprising, there exist collaborator (defector) dominant configurations such that $C_0$ contains relatively few collaborators (defectors).

For fixed  positive integers $j$ and $n$, denote by $G_{j,n}$ the graph formed from the disjoint union of $n+1$ cliques, $K_{j} \cup K_{2j} \cup K_{j2^2} \cup\dots \cup K_{j2^n}$, with edges between successive cliques such that 
\begin{itemize}
	\item each vertex of $K_{j}$ has exactly two neighbours in $K_{2j}$;
	\item each vertex of $K_{j2^i}$ has exactly two neighbours in $K_{j2^{i+1}}$ and one neighbour in $K_{j2^{i-1}}$, for $1 \leq i \leq n-1$; and
	\item each vertex of $K_{j2^n}$ has exactly one neighbour in $K_{j2^{n-1}}$.
\end{itemize}

Using $G_{j,n}$ we show that for any $w$, a vanishingly small proportion of collaborators or defectors can spread to fill the entire graph.

\begin{theorem} \label{prop:CMinWin}
	For $w < \frac{j}{j+2}$,  the $w$-power index process on $G_{j,n}$ with
	\[ C_0(v) = \begin{cases} 
	C & v \in K_j, \\
	D & \mbox{otherwise.} 
	\end{cases}
	\] is collaborator dominant.
	
\end{theorem}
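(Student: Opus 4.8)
The plan is to track the process layer-by-layer and show that collaboration propagates outward one clique at a time, never retreating, until the whole graph is collaborating. First I would set up notation: write $L_0 = K_j$, $L_i = K_{j2^i}$ for $0 \le i \le n$, and note the key structural facts about $G_{j,n}$ that drive everything. A vertex $v \in L_i$ (for $1 \le i \le n-1$) has closed neighbourhood consisting of itself, the $j2^i - 1$ other vertices of $L_i$, its two neighbours in $L_{i+1}$, and its one neighbour in $L_{i-1}$, so $|N[v]| = j2^i + 2$; for $v \in L_0$ we get $|N[v]| = j + 2$ and for $v \in L_n$ we get $|N[v]| = j2^n + 1$. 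The threshold $w < \frac{j}{j+2}$ is exactly calibrated so that in the smallest clique $K_j$, once all $j$ of its own vertices collaborate, the collaborator fraction $\frac{j}{j+2}$ (when the two outside neighbours defect) already exceeds $w$ — this is the engine that keeps $L_0$ collaborating forever.

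The main argument is an induction on $t$ establishing the invariant: at time $t$, every vertex in $L_0, L_1, \dots, L_t$ is a collaborator, and this persists. The base case $t=0$ is the hypothesis. For the inductive step I would argue in two parts. (1) \emph{Monotonicity / no retreat}: if at time $t$ every vertex of $L_0, \dots, L_t$ collaborates, I check that each such vertex has a collaborator in its closed neighbourhood whose power is at least as large as that of any defector neighbour — in fact I want to show a collaborator neighbour is the unique argmax, or that ties are broken in favour of retaining the collaborator strategy. Here the crucial computation is comparing powers: a collaborator $v \in L_i$ with $|N_C[v]|$ collaborators in its closed neighbourhood has power $\frac{1}{|N_C[v]|}$, and I want $|N_C[v]|$ to be small relative to the defector-degree of any would-be defector pivot. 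Since the cliques grow geometrically, a collaborator in a smaller clique always has strictly larger power than any defector in a larger clique, so collaboration "wins" whenever the fraction condition is met. The fraction condition for $v \in L_i$ ($i \ge 1$) is satisfied because once $L_{i-1}, L_i$ are all collaborators, $v$ sees at least $j2^i + 1$ collaborators out of $j2^i + 2$ (or $j2^i+1$), well above $w$. (2) \emph{Advance}: a vertex $u \in L_{t+1}$ has a neighbour in $L_t$, which is a collaborator at time $t$; I must show that this collaborator's power exceeds that of every other vertex in $N[u]$ at time $t$. The collaborator in $L_t$ has power $\frac{1}{|N_C[\cdot]|}$ where $|N_C[\cdot]| \le j2^t + 2$; every other vertex in $N[u]$ lies in $L_{t+1}$ or $L_{t+2}$, and if it has positive power (collaborator or defector) that power is at most $\frac{1}{?}$ with a denominator of order $j2^{t+1}$ or larger — strictly bigger denominator, hence strictly smaller power. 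So $u$ copies the $L_t$ collaborator and becomes a collaborator at time $t+1$.

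The subtle points I expect to be the main obstacle are the boundary cliques and the exact power comparisons at the "frontier." For $L_n$, which has no outgoing edges, I need $w < \frac{j2^n}{j2^n+1}$, which follows a fortiori from $w < \frac{j}{j+2}$ since the right side is increasing in clique size; but I should still verify the last step where $L_n$ flips. At the frontier between the all-collaborator region $L_0,\dots,L_t$ and the region $L_{t+1},\dots$, I must be careful that a vertex in $L_t$ does not accidentally see a \emph{defector} in $L_{t+1}$ with larger power than its collaborator clique-mates — but defectors in $L_{t+1}$ at time $t$ have power $\frac{1}{|N_D[\cdot]|}$ with $|N_D[\cdot]|$ at least of order $j2^{t+1}$, so their power is at most about $\frac{1}{j2^{t+1}}$, strictly less than the $\ge \frac{1}{j2^t+2}$ enjoyed by a collaborator in $L_t$; the geometric growth of the cliques is exactly what makes this inequality hold with room to spare. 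Finally, after $n$ steps every vertex is a collaborator, and the monotonicity argument (part 1, now applied with $t = n$) shows $C_n = C_{n+1}$, so the configuration is stable and collaborator dominant. I would present the power comparisons in a short lemma-like computation to keep the induction clean, and handle $i=0$, $1 \le i \le n-1$, and $i=n$ as three explicit cases wherever neighbourhood sizes appear.
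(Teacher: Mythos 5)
Your plan is essentially the paper's own proof: an induction on time showing the collaborator region advances exactly one clique per step, driven by explicit power comparisons and the fact that $w < \frac{j}{j+2}$ keeps $K_j$'s power positive, so I consider it correct and the same approach. Two small fixes when you write it up: the closed neighbourhood of a vertex in an interior clique $K_{j2^i}$ has size $j2^i+3$ (not $j2^i+2$), and you should strengthen the induction hypothesis to the two-sided statement that all cliques beyond $L_t$ are still defectors (as the paper does), since your bound on the power of defectors in $L_{t+1}$ uses that their cliques are entirely defecting rather than merely ``having positive power.''
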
	

\begin{proof}
	Consider the $w$-power index process on $G_{j,n}$ with
	\[ C_0(v) = \begin{cases} 
	C & v \in K_j, \\
	D & \mbox{otherwise.} 
	\end{cases}\]

	We show for $1 \leq i \leq n+1$ we have
		\[ C_i(v) = \begin{cases} 
		C, & v \in K_j \cup K_{2j} \cup \dots \cup K_{j2^{i-1}} , \\
		D, & \mbox{otherwise.} 
		\end{cases} \]

	We consider first the case $i =1$.
	We compute

	\[ p_0(v) = \begin{cases} 
	\frac{1}{j}, & v \in K_j \\
	\frac{1}{2j+2} & v \in K_{2j} \\
	\frac{1}{j2^k + 3}, & v \in K_{j2^k}, 2 \leq k \leq n-1\\
	\frac{1}{j2^n + 1} & v \in K_{j2^n}
	\end{cases} \]
	
	Thus 
	\[ C_1(v) = \begin{cases} 
	C & v \in K_j \cup K_{2j}, \\
	D & \mbox{otherwise.} 
	\end{cases} \]
	
Consider now $2 \leq i \leq n-2$ so that
\[ 
C_i(v) = 
\begin{cases} 
C, & v \in K_j \cup K_{2j} \cup \dots \cup K_{j2^{i}} , \\
D, & \mbox{otherwise.} 
\end{cases} \] 

We compute
	\[ p_i(v) = 
\begin{cases} 
	\frac{1}{j+2}, & v \in K_j \\
	\frac{1}{j2^\ell +3} & v \in K_{j2^\ell}, 1 \leq \ell \leq i-1\\
	\frac{1}{j2^i+1} & v \in K_{j2^i}\\
	\frac{1}{j2^{\ell+1}+3} & v \in K_{j2^\ell}, i+1 \leq \ell \leq n-1\\
	\frac{1}{j2^n+1} & v \in K_{2^n}\\
\end{cases} \]

And so 
		\[ C_{i+1}(v) = \begin{cases} 
C, & v \in K_j \cup K_{2j} \cup \dots \cup K_{j2^{i+1}} , \\
D, & \mbox{otherwise.} 
\end{cases} \] 

Finally, consider

\[ C_{n-1}(v) = \begin{cases} 
C, & v \in K_j \cup K_{2j} \cup \dots \cup K_{j2^{n-1}} , \\
D, & \mbox{otherwise.} 
\end{cases} \] 

We compute
	\[ p_{n-1}(v) = 
\begin{cases} 
\frac{1}{j+2}, & v \in K_j \\
\frac{1}{j2^\ell +3} & v \in K_{j2^\ell }, 1 \leq \ell \leq n-2\\
\frac{1}{j2^{n-1}+1} & v \in K_{j2^{n-1}}\\
\frac{1}{j2^n} & v \in K_{2^n}\\
\end{cases} \]

And so 
$C_{n}(v) = C$ for all $v \in  V(G_{j,n})$. 
This completes the proof.
\end{proof}

\begin{theorem}
	For all $w$, the $w$-power index process on $G_{3,n}$ with
	
	\[ C_0(v) = \begin{cases} 
	D & v \in K_3, \\
	C & \mbox{otherwise.} 
	\end{cases}
	\] is defector dominant.
\end{theorem}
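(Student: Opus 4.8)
The plan is to mirror the proof of Theorem~\ref{prop:CMinWin}, now tracking a ``defector front'' that advances by exactly one clique per round. Write $V_0,V_1,\dots,V_n$ for the cliques $K_3,K_6,\dots,K_{3\cdot 2^n}$ making up $G_{3,n}$, so that $|V_i|=3\cdot 2^i$, and record the closed neighbourhood sizes once: $|N[v]|=5$ for $v\in V_0$, $|N[v]|=3\cdot 2^i+3$ for $v\in V_i$ with $1\le i\le n-1$, and $|N[v]|=3\cdot 2^n+1$ for $v\in V_n$. I would then prove by induction on $i$ that for every $0\le i\le n$,
\[
C_i(v)=\begin{cases} D, & v\in V_0\cup V_1\cup\dots\cup V_i,\\ C, & \text{otherwise,}\end{cases}
\]
the case $i=0$ being the hypothesis. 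Once $C_n$ is reached every vertex is a defector, and the all-defector configuration is a fixed point: each $v$ has $N_C[v]=\emptyset$, so the defector win condition $|N_C[v]|/|N[v]|=0\le w$ holds, every vertex has positive power, and every vertex copies a defector. Hence $C_0$ is defector dominant.

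For the inductive step I would read off all the powers $p_i(v)$ from $C_i$; the vertices fall into a few kinds. For $i\ge 1$ the source $V_0$ has $N_C[v]=\emptyset$ and power $\frac{1}{5}$; at $i=0$ it has $|N_C[v]|=2$ and power $\frac{1}{3}$. An interior defector clique $V_j$ with $1\le j\le i-1$ has $N_C[v]=\emptyset$ and power $\frac{1}{3\cdot 2^j+3}$. The front defector clique $V_i$ (with $1\le i\le n-1$) has as its only collaborator neighbours its two neighbours in $V_{i+1}$, so $|N_C[v]|=2$ and the win condition holds since $2/|N[v]|\le\frac25<\frac12\le w$, while its closed defector neighbourhood is $V_i$ together with its single neighbour in $V_{i-1}$, giving power $\frac{1}{3\cdot 2^i+1}$. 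Each vertex of the front collaborator clique $V_{i+1}$ has exactly one defector neighbour (its neighbour in $V_i$), so $|N_C[v]|=3\cdot 2^{i+1}+2$ (or $3\cdot 2^n$ if $i+1=n$), and its power is the reciprocal of that when $w$ lies below the corresponding ratio and $0$ otherwise. The remaining collaborator cliques $V_{i+2},\dots,V_n$ have only collaborator neighbours.

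The update is then forced. A vertex of $V_{i+1}$ has a unique defector neighbour, of power $\frac{1}{3\cdot 2^i+1}$, and every collaborator in its closed neighbourhood has power at most $\frac{1}{3\cdot 2^{i+1}+2}<\frac{1}{3\cdot 2^i+1}$ (or power $0$), so its unique maximum-power neighbour is that defector and it flips to $D$. A vertex at or behind the front has its whole closed neighbourhood inside $V_0\cup\dots\cup V_i$ except, if it lies in $V_i$, for its neighbours in $V_{i+1}$; but the front clique's own defector power $\frac{1}{3\cdot 2^i+1}$ already exceeds those collaborator powers, while the power $\frac{1}{3\cdot 2^{i-1}+3}$ ($=\frac15$ if $i=1$) of its neighbour in $V_{i-1}$ exceeds $\frac{1}{3\cdot 2^i+1}$, so its maximum-power neighbour is a defector and it stays $D$. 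A vertex of $V_{i+2}$ or later sees only collaborators, whose maximum power is attained by a collaborator, so it stays $C$. This establishes $C_{i+1}$ and completes the induction.

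The work here is bookkeeping rather than insight. I must verify the elementary inequalities $3\cdot 2^i+1<3\cdot 2^{i+1}+2$ and $3\cdot 2^{i-1}+3<3\cdot 2^i+1$ over the relevant ranges of $i$ (noting that $V_0$ contributes the even smaller value $5$ when $i=1$, which only helps), run the small cases $n=1,2$ directly since the interior cliques are then absent, and treat the first step $C_0\to C_1$ and the last step $C_{n-1}\to C_n$ with their slightly different neighbourhood sizes, much as the proof of Theorem~\ref{prop:CMinWin} is organised. The one point that genuinely uses the hypotheses is uniformity in $w$: for $w$ close to $1$ several collaborator cliques carry power $0$, but this only strengthens the front defector's advantage, and every defector-winning ratio that arises is at most $\frac25<\frac12$, so it holds for all $w\ge\frac12$. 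This is also exactly where $j=3$ is needed, since the source satisfies $|N_C[v]|/|N[v]|=\frac{2}{j+1}$, and one needs $\frac{2}{j+1}\le w$ for every $w\in[\frac12,1)$, i.e.\ $j\ge 3$.
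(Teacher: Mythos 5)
Your proposal is correct and is precisely the front-advancing induction the paper intends: the paper omits this proof, stating only that it "follows similarly" to Theorem~\ref{prop:CMinWin}, and your argument is that proof with the roles of collaborators and defectors exchanged, including the correct power bookkeeping at the source, the front, and the two boundary steps. One cosmetic slip in your closing aside: for $v\in K_j$ one has $|N[v]|=j+2$, so the source ratio is $\frac{2}{j+2}$ (not $\frac{2}{j+1}$), which would already be at most $\frac12$ for $j\geq 2$; this does not affect the $j=3$ case you actually prove, where the ratio $\frac{2}{5}<\frac12\leq w$ is exactly what you use.
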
	

\begin{proof}
	The proof of this result follows similarly to that of Theorem \ref{prop:CMinWin} and thus is omitted.
\end{proof}

\begin{corollary} \label{cor:CMinWin}
	For any $\epsilon > 0$, and any $w \in \left[\frac{1}{2},1\right)$ there exists a graph $G$ and a configuration $C_0$ such that the density of collaborators in $C_0$ is less than $\epsilon$ and $C_0$ is collaborator dominant.
\end{corollary}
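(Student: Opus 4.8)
The plan is to deduce this directly from Theorem~\ref{prop:CMinWin} by choosing the parameters $j$ and $n$ of the graph $G_{j,n}$ appropriately. The only two things that need to be arranged are that the win condition hypothesis $w < \frac{j}{j+2}$ of Theorem~\ref{prop:CMinWin} is satisfied, and that the collaborator density in the prescribed initial configuration is below $\epsilon$.

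First I would handle the win condition. Rearranging, $w < \frac{j}{j+2}$ is equivalent to $j > \frac{2w}{1-w}$, and since $w \in \left[\frac{1}{2},1\right)$ we have $1 - w > 0$, so the right-hand side is a fixed finite number; hence any integer $j > \frac{2w}{1-w}$ works, and I would fix such a $j$.

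Next I would compute the collaborator density. With $j$ fixed as above, consider the family of graphs $G_{j,n}$ for $n \geq 1$. In the configuration $C_0$ of Theorem~\ref{prop:CMinWin} the collaborators are exactly the vertices of $K_j$, so there are $j$ of them, while the total number of vertices is $|V(G_{j,n})| = j + 2j + 4j + \dots + j2^n = j\bigl(2^{n+1}-1\bigr)$. Thus the collaborator density is $\frac{j}{j(2^{n+1}-1)} = \frac{1}{2^{n+1}-1}$, which tends to $0$ as $n \to \infty$. I would therefore pick $n$ large enough that $\frac{1}{2^{n+1}-1} < \epsilon$, set $G = G_{j,n}$ with $C_0$ as in Theorem~\ref{prop:CMinWin}, and invoke that theorem to conclude $C_0$ is collaborator dominant.

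There is essentially no obstacle here; the statement is a packaging of Theorem~\ref{prop:CMinWin}, and the ``work'' is just the elementary observations that $\frac{j}{j+2} \to 1$ as $j \to \infty$ (so the win condition can always be met) and that the geometric growth of the clique sizes forces the collaborator density down to $0$. The only point worth stating carefully is the joint choice of $j$ and $n$: $j$ is determined solely by $w$, and then $n$ is chosen afterwards, depending on $\epsilon$ (and on the already-fixed $j$, though in fact the density $\frac{1}{2^{n+1}-1}$ does not even depend on $j$).
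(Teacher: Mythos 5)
Your proposal is correct and follows essentially the same route as the paper: cite Theorem~\ref{prop:CMinWin} and observe that the collaborator density in $G_{j,n}$ tends to $0$ as $n \to \infty$. You are in fact slightly more careful than the paper's two-line proof, since you make explicit the choice of $j$ with $w < \frac{j}{j+2}$ (which the paper leaves implicit) and you compute the exact vertex count $j\left(2^{n+1}-1\right)$ rather than the paper's $j2^{n+1}$.
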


\begin{proof}
	This follows directly from Proposition \ref{prop:CMinWin} with the observation  \[\frac{|V(K_j)|}{|V(G_{j,n})|} = \frac{j}{j2^{n+1}} \to 0 \mbox{ as } n \to \infty.\]
\end{proof}

\begin{corollary}\label{cor:DMinWin}
	For any $\epsilon > 0$, and any $w \in \left[\frac{1}{2},1\right)$ there exists a graph $G$ and a configuration $C_0$ such that the density of defectors in $C_0$ is less than $\epsilon$ and $C_0$ is defector dominant.
\end{corollary}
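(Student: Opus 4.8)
The plan is to deduce the statement directly from the defector-dominant theorem for $G_{3,n}$ proved just above, in exact parallel to the proof of Corollary~\ref{cor:CMinWin}. Fix $\epsilon > 0$ and a win condition $w \in [\frac{1}{2}, 1)$. The advantage over the collaborator case is that the $G_{3,n}$ theorem already holds for \emph{every} admissible $w$ with the clique sizes fixed, so there is no need to tune a parameter $j$ to $w$; it suffices to exhibit a single member $G_{3,n}$ of the family whose density of defectors in the prescribed initial configuration falls below $\epsilon$.

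So the first step is to count vertices. The graph $G_{3,n}$ is the union of the cliques $K_3, K_6, \dots, K_{3\cdot 2^n}$ with the prescribed connecting edges, hence
\[
|V(G_{3,n})| \;=\; \sum_{i=0}^{n} 3\cdot 2^{i} \;=\; 3\bigl(2^{n+1}-1\bigr).
\]
In the initial configuration $C_0$ the defectors are exactly the three vertices of $K_3$, so the density of defectors in $C_0$ equals $\dfrac{3}{3(2^{n+1}-1)} = \dfrac{1}{2^{n+1}-1}$, which tends to $0$ as $n\to\infty$. Thus I would fix any $n$ with $2^{n+1}-1 > 1/\epsilon$, set $G = G_{3,n}$, and let $C_0$ assign $D$ to each vertex of $K_3$ and $C$ to every other vertex. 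By the theorem above, this $C_0$ is defector dominant, and the defector density is less than $\epsilon$ by the choice of $n$, which is all that is required.

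There is no real obstacle here: the substantive work lies in the $G_{3,n}$ theorem itself, whose proof (omitted in the text) runs along the lines of Theorem~\ref{prop:CMinWin} --- computing the power of each vertex class at each step and verifying that the boundary between the defecting block and the collaborating block advances by one clique per round until the whole graph defects. The only small point worth checking is that a graph $G_{3,n}$ with the stated adjacencies actually exists, which follows because the two counts of edges between consecutive cliques agree, namely $3\cdot 2^{i}\cdot 2 = 3\cdot 2^{i+1}\cdot 1$ edges between $K_{3\cdot 2^{i}}$ and $K_{3\cdot 2^{i+1}}$.
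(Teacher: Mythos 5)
Your proposal is correct and matches the argument the paper intends: the corollary is deduced from the defector-dominance theorem for $G_{3,n}$ (valid for all $w$) together with the vanishing density of the $K_3$ block, exactly mirroring the paper's proof of Corollary~\ref{cor:CMinWin}. Your vertex count $|V(G_{3,n})| = 3(2^{n+1}-1)$ is in fact slightly more careful than the paper's stated ratio $\frac{j}{j2^{n+1}}$, but both give density tending to $0$, so nothing changes.
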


Corollaries \ref{cor:CMinWin} and \ref{cor:DMinWin} show that regardless the value of $w$, it is possible that relatively few collaborators or defectors can spread their influence to every vertex of the graph. 
In each case we notice that this happens after $diam\left(G_{j,n}\right)$ timesteps.

Next we examine the effect of the win condition, $w$, on the process. 
We show that for the fixed graph, the set of vertex valences give rise to an equivalence relation on $[\frac{1}{2}, 1)$ with respect to the behaviour of the process. 
That is, we partition $[\frac{1}{2}, 1)$ such that a pair of win conditions in the same part give rise to the same process.

Recall the example given in Figure \ref{fig:cyclExample}. For $w = \frac{1}{2}$ the process is cyclic with period $2$. However, it can be verified that if $w = 0.6 $, then the evolution of the process is unchanged.

Let $G$ be a $k$-regular graph, $v$  be a vertex of $G$ and $C$ a configuration such that $C(v) = C$.  
Since $v$ has degree $k$ it must be that $\frac{|N_C[v]|}{|N[v]|} \in \{\frac{1}{k+1}, \frac{2}{k+1}, \dots , \frac{k+1}{k+1} \}$. 
Since $G$ is $k$-regular, for any fixed value of $w \in [\frac{i}{k+1}, \frac{i+1}{k+1})$ the resulting behaviour of the process will be the same. 
By generalising to non-regular graphs, we arrive at the following definition.

For a graph $G$ and a vertex $v$ let $S_v = \left\{ \frac{i}{|N[v]|} :   \frac{ |N[v]|}{2} \leq i \leq |N[v]|, \; i \in \mathbb{Z} \right\}$ and $S_G~=~\bigcup_{v \in V(G)}~S_v$. 
The set $S_v$ consists of all non-zero possible values of $p(v)$.
And so the set $S_G$ is the set of non-zero possibilities for the power of a vertex in $G$.
The \emph{win partition} of $G$ is the partition   $\alpha = \{  [\frac{1}{2}, s_1),[s_1, s_2),[s_2,s_3)\dots [s_{|S_G|-1} ,s_{|S_G|}) \} $ of $\left[\frac{1}{2},1\right)$ where $s_i \in S_G$ and $s_1 < s_2  < \dots  < s_{|S_G|}$.

\begin{theorem}\label{thm:PartitionWorks}
	Let $G$ be a graph with win partition $ \alpha = \{\alpha_1, \alpha_2, \dots , \alpha_k\}$ and let  $w, w^\prime \in \alpha_i$.	
	If $C_1, C_2 \dots$ is the sequence of configurations resulting from the $w$-power index process with initial configuration $C_0$ and $C^\prime_1, C^\prime_2 \dots$ is the sequence of configurations resulting from the $w^\prime$-power index with initial configuration $C^\prime_0 = C_0$, then $C^\prime_i = C_i$ for all $i > 0$.
\end{theorem}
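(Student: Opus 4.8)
The plan is to induct on the time step $t$, with inductive hypothesis simply that $C_t = C'_t$. The base case $t=0$ is the hypothesis $C'_0 = C_0$. For the inductive step, recall that $C_{t+1}$ is determined by $C_t$ together with the powers $p_t(u)$ of all vertices $u$ (these powers being computed with respect to $w$): the update rule has each $v$ adopt the strategy of the vertex of $N[v]$ having the largest $p_t$, retaining its own strategy exactly when two vertices of $N[v]$ tie for the largest power but disagree in strategy. The same description holds for the $w'$-process with $p'_t$ replacing $p_t$. Consequently, if we know both $C_t = C'_t$ and $p_t(u) = p'_t(u)$ for every $u \in V(G)$, then the update rule forces $C_{t+1} = C'_{t+1}$, which advances the induction. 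So the whole theorem reduces to the claim: $C_t = C'_t$ implies $p_t = p'_t$.

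To prove this claim, fix a vertex $u$ and set $r = \frac{|N_C[u]|}{|N[u]|}$; since $C_t = C'_t$, the integers $|N_C[u]|$, $|N_D[u]|$, $|N[u]|$, and hence $r$, are the same in the two processes. Inspecting the definition of power, $p_t(u)$ can differ from $p'_t(u)$ only through the single comparison of $r$ with the win condition --- ``$r > w$?'' when $u$ is a collaborator, ``$r \le w$?'' when $u$ is a defector --- because the nonzero value itself, $\frac{1}{|N_C[u]|}$ or $\frac{1}{|N_D[u]|}$ (well defined since $u$ lies in the relevant closed neighbourhood), does not involve $w$. The defector comparison is the negation of the collaborator one, so it suffices to show $r > w \iff r > w'$. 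If $r \le \tfrac12$, both sides are false, since $w, w' \ge \tfrac12$. Otherwise $r > \tfrac12$; writing $m = |N[u]|$ and $\ell = |N_C[u]|$ we have $m/2 < \ell \le m$, so $r = \ell/m \in S_u \subseteq S_G$, whence $r = s_j$ for some $j \ge 1$ in the notation of the win partition. Letting $[s_{q-1}, s_q)$ (with the convention $s_0 = \tfrac12$) be the common part containing both $w$ and $w'$, the strict monotonicity $s_1 < s_2 < \cdots$ gives $s_j > w \iff j \ge q \iff s_j > w'$. This proves the claim, and the induction then yields $C'_i = C_i$ for all $i > 0$.

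There is no genuinely hard step here; the argument is bookkeeping around the definition of the win partition. The one point that deserves care is the observation that every ratio $\frac{|N_C[u]|}{|N[u]|}$ capable of affecting a comparison with a win condition --- that is, every such ratio exceeding $\tfrac12$ --- is automatically a member of $S_G$, which is exactly how $S_G$ is assembled from the denominators $|N[v]|$. Ratios at or below $\tfrac12$ never need to be separated by the partition, since such a ratio yields power $0$ for a collaborating $u$ and a nonzero power for a defecting $u$, uniformly over all admissible $w \in [\tfrac12, 1)$; it is precisely this uniformity at and below $\tfrac12$ that lets the partition be built from $S_G$ alone.
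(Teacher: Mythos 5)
Your proposal is correct and follows essentially the same route as the paper: induction on the time step, reducing to the claim that equal configurations yield equal powers, which holds because any ratio $\frac{|N_C[u]|}{|N[u]|}$ exceeding $\tfrac12$ lies in $S_G$ and therefore cannot separate two win conditions drawn from the same part of the win partition. Your write-up is in fact a bit more careful than the paper's (you handle the case $r\le\tfrac12$, the defector case, and the included left endpoint of a part explicitly), but the underlying argument is the same.
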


\begin{proof}
	Let $S_G = \{s_1, s_2, \dots s_\ell\}$  such that $s_1 < s_2 < \dots  <s_\ell$ and let $s_i < w <  w^\prime < s_{i+1}$. 
	We proceed by induction.
	
	Assume $C_i = C_i^\prime$. For all $v \in V(G)$ let $p_i(v)$ and $p^\prime_i(v)$ be the power index of $v$ in $C_i$ and $C_i^\prime$, respectively. If $deg(v) = k$, then $p_i(v), p^\prime_i(v) \in \{0,\frac{1}{k+1}, \frac{2}{k+1}, \dots, 1\}$. 
	We show  $p_i(v) = p^\prime_i(v)$.
	
	If $C_i(v) = C$, and $p_i(v) \neq 0$, then $\frac{|N_C[v]|}{k+1} > w$. If $p_i(v) \neq p^\prime_i(v)$, then $p^\prime_i(v) = 0$. This implies  $s_i \leq w < \frac{|N_C[v]|}{k+1} \leq  w^\prime < s_{i+1}$. However, by definition of $\alpha$ there exists $s_k$ such that $\frac{|N_C[v]|}{k+1} = s_k$, a contradiction. 
	A similar argument holds when  $C_i(v) = D$.
	Therefore if $C_i = C_i^\prime$ then $C_{i+1} = C_{i+1}^\prime$. 
	The result follows as $C_0 = C_0^\prime$.
\end{proof}

Theorem \ref{thm:PartitionWorks} implies that in studying the $w$-power index process for a fixed graph $G$, we need only consider finitely many values of $w$ -- one from each equivalence class implied by the win partition. 
In practise we take the included lower bound of each part as the representative element.

\begin{theorem}\label{thm:LastTwo}
	Let $G$ be a graph with win partition $\alpha = \{\alpha_1, \alpha_2, \dots , \alpha_k\}$. If $w \in \alpha_{k-1} \cup \alpha_k$, then every configuration $C_0$ on $G$ is eventually stable in the $w$-power index process.
\end{theorem}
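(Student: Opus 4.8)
The plan is to produce a monotone invariant that forces the process to settle down. By Theorem~\ref{thm:PartitionWorks} it suffices to treat one representative $w$ from each of $\alpha_{k-1}$ and $\alpha_k$, though the argument below is uniform over all $w\in\alpha_{k-1}\cup\alpha_k$; we may also assume $\Delta:=\Delta(G)\ge1$, since an edgeless graph is already stable. The hypothesis enters only through the following \emph{ratio fact}: for a vertex $x$ in any configuration, if $\frac{|N_C[x]|}{|N[x]|}>w$ then this ratio, being an element of $S_G$ larger than $w$, must equal $s_{k-1}=\frac{\Delta}{\Delta+1}$ or $s_k=1$, and in the former case $\deg x=\Delta$ with exactly one vertex of $N[x]$ a defector. (One first checks that $s_{k-1}=\frac{\Delta}{\Delta+1}$ is the largest element of $S_G$ below $1$, and that for $w\in\alpha_{k-1}\cup\alpha_k$ no other element of $S_G$ can exceed $w$.)

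From the ratio fact I would extract two statements about any configuration $C$. \emph{(i) A defector with positive power cannot become a collaborator in the next round.} If $C(v)=D$ and $p(v)=\frac{1}{|N_D[v]|}>0$, a flip of $v$ to $C$ would need a collaborator $u\in N[v]$ with $p(u)>p(v)$; but $u$ has the defector $v$ in its closed neighbourhood, so by the ratio fact either $p(u)=0$, or $\deg u=\Delta$ with $v$ its unique defector neighbour and $p(u)=\frac1\Delta$. Since $v$ has a collaborator neighbour, $|N_D[v]|\le\Delta$, so $p(u)\le\frac1\Delta\le p(v)$ --- a contradiction. \emph{(ii) Two adjacent defectors both remain defectors.} If $u,v$ are adjacent defectors in $C$, then neither has all of its neighbours collaborators, so by the ratio fact neither has power $0$; hence both have positive power and (i) applies to each.

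Now write $D_i$ for the defector set of $C_i$ and $E(D_i)$ for the set of edges of $G$ with both ends in $D_i$. Statement (ii) gives $E(D_0)\subseteq E(D_1)\subseteq\cdots$, a nested chain inside the finite set $E(G)$, so there is $T$ with $E(D_i)=E(D_T)$ for all $i\ge T$; write $E^{*}$ for this common edge set and $W^{*}$ for its set of endpoints, so $W^{*}\subseteq D_i$ whenever $i\ge T$. I claim that from round $T$ onward no collaborator ever becomes a defector. Suppose instead $C_i(x)=C$ and $C_{i+1}(x)=D$ for some $i\ge T$. Then some defector $d\in N[x]$ has strictly the greatest power in $N[x]$, so $p(d)>0$ and $\deg d\ge1$. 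If $d$ had no defector neighbour, then $N_D[d]=\{d\}$, hence $p(d)=1$; each neighbour of $d$ is then a collaborator that sees a power-$1$ vertex in its closed neighbourhood, and a collaborator of positive degree cannot have power $1$ (that would require all of its neighbours to be defectors while $\frac{1}{\deg+1}>w\ge\frac12$, forcing degree $0$), so every neighbour of $d$ copies a defector and becomes one in round $i+1$; but then a new edge at $d$ lies in $E(D_{i+1})$, contradicting $E(D_{i+1})=E^{*}$. So $d$ has a defector neighbour, whence $d\in W^{*}$, and $d$ is still a defector in round $i+1$ by (i); then $xd\in E(D_{i+1})=E^{*}$, forcing $x\in W^{*}\subseteq D_i$, which contradicts $C_i(x)=C$. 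Therefore $D_T\supseteq D_{T+1}\supseteq\cdots$, so the defector set is eventually constant, and since a configuration is determined by its defector set, the process is eventually stable.

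The step I expect to be the main obstacle is the ``infection'' argument above --- that a positive-power defector with no defector neighbour must drag some neighbour into defecting. That is the one place where the tie-breaking rule has to be used carefully (one needs that the maximal power $1$ occurring in the relevant closed neighbourhood is attained only by defectors) and where degenerate vertices must be set aside (an isolated collaborator carries power $1$ but influences nobody). The other place demanding care is the bookkeeping behind the ratio fact --- identifying $s_{k-1}=\frac{\Delta}{\Delta+1}$ and carrying out the resulting split on whether $\deg v=\Delta$ --- but this is routine once the structure of $S_G$ near its largest elements is understood.
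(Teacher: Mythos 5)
Your proof is correct, but it takes a genuinely different route from the paper's. The paper splits into two cases, uses Theorem~\ref{thm:PartitionWorks} to reduce to the representatives $\frac{\Delta-1}{\Delta+1}$ (for $\alpha_{k-1}$) and $\frac{\Delta}{\Delta+1}$ (for $\alpha_k$), shows that adjacent defectors persist, and then in the first case derives a contradiction from the assumption that the process is eventually periodic (a vertex that flips $C\to D$ must do so because of a defector that is permanently locked in, so it can never flip back), while in the second case it shows a single defector sweeps the graph within $diam(G)$ steps. You instead treat both parts uniformly via your ratio fact (only the ratios $\frac{\Delta}{\Delta+1}$ and $1$ can exceed $w$), deduce that positive-power defectors never flip and that adjacent defectors persist, and then run a two-stage monotonicity argument: the set of defector--defector edges is non-decreasing and hence stabilises, after which no collaborator can ever flip, so the defector set is non-increasing and the configuration freezes. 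What your approach buys: it never assumes periodicity, it needs no choice of representative (note that for irregular graphs $\frac{\Delta-1}{\Delta+1}$ need not even lie in $\alpha_{k-1}$, and the paper's inference in Case~I that a positive-power defector has at least two defectors in its closed neighbourhood leans on that representative; your ratio fact sidesteps this), and it yields an explicit termination mechanism. What the paper's Case~II buys is a stronger structural conclusion (defector dominance within $diam(G)$ steps) that your argument does not recover but that the theorem does not require. One small remark: your Case~A (the ``infection'' branch where $d$ has no defector neighbour) is redundant --- statement (i) already keeps $d$ a defector in round $i+1$, so $xd\in E(D_{i+1})=E(D_i)$ forces $x\in D_i$ directly, the same contradiction as in your second branch, without any analysis of $d$'s neighbours.
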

\begin{proof}
	Let $G$ be a graph with win partition $\alpha = \{\alpha_1, \alpha_2, \dots , \alpha_k\}$.
	We proceed in cases.
	
	\emph{Case I: $w \in \alpha_{k-1}$} 
	
	By Theorem \ref{thm:PartitionWorks} we need only consider the $\frac{\Delta-1}{\Delta+1}$-power index process on $G$. Let $C_0$ be a configuration and let $p_0$ be the power  of vertices of $G$ with respect to $C_0$. Consider $v$ such that $C_0(v) = C$.
	 When $w= \frac{\Delta-1}{\Delta+1}$ we have $p_0(v) > 0$ if and only if $N[v] = N_C[v]$ or if $N_C[v] = \Delta$.  
	 If $N_C[v] = \Delta$ then $p_0(v) = \frac{1}{\Delta}$.   
	 Consider $v$ such that  $C_0(v) = D$. 
	 We have $p_0(v) > 0$ if and only if $|N_D[v]| \geq 2$. 
	 Observe that if $p(v) > 0$, but $N_C[v] \neq \emptyset$, then  $p_0(v) \geq \frac{1}{\Delta}$.
	
	We claim that if $C_i(x) = C_i(y) = D$ for $xy \in E(G)$, then $C_{i+1}(x) = C_{i+1}(y) = D$. 
	Without loss of generality, assume $C_{i+1}(x) = C$. This implies that there exists $wx \in E(G)$ such that $C_i(w) = C$ and $p_i(w) > p_i(x)$. Since $p_i(w) >0$, it follows that $N_C[w] = \Delta$. Therefore  $p_i(v) = \frac{1}{\Delta}$. Since $N_C[x] \neq \emptyset$ and $|N_D[x]| \geq 2$ it follows that $p_0(x) \geq \frac{1}{\Delta}$, a contradiction.
	
	If the  $\frac{\Delta-1}{\Delta+1}$-power index process on $G$ seeded with $C_0$ is eventually cyclic, then there exists a vertex $x$ and $i>j > 0$ such that $C_i(x) = C, C_{i+1}(x) = D$ and $C_j(x) = D$ and $C_{j+1}(x) = C$. 	
	If $C_i(x) = C$ and $C_{i+1}(x) = D$ then there exists a neighbour, $y$ of $x$ such that $p_i(y) > p_i(x)$ and $C_i(y) = D$. 
	Since $p_i(y) > 0$,  we have $|N_D[y]| \geq 2$. 
	Therefore $C_s(y) = D$ for all $s \geq i$. 
	However, by the previous statements, $C_t(x) = D$  for all $t \geq i+1$, as $xy\in E(G)$ and $C_{i+1}(x) = C_{i+1}(y) = D$. 
	This contradicts that the process is eventually periodic.
	
	\emph{Case II:  $w \in \alpha_{k}$} 
	
	By Theorem \ref{thm:PartitionWorks} we need only consider the $\frac{\Delta}{\Delta+1}$-power index process on $G$.  Let $C_0$ be a configuration and let $p_0$ be the power  of vertices of $G$ with respect to $C_0$. Consider $v$ such that $C_0(v) = C$. 
	If $w = \frac{\Delta}{\Delta+1}$, then $p_0(v) > 0$ if and only if $N_C[v] = N[v]$. 
	Therefore if $x$ and $y$ are adjacent vertices such that $C_0(x) = C$ and $C_0(y)=D$, then $C_1(x) = C_1(y) = D$.
	 And so if there exists $v$ such that $C_0(v) = D$ then $C_0$ is defector dominant after at most $diam(G)$ time-steps.
	
\end{proof}

\begin{corollary}
	If $G$ is a cycle, path or a $3$-regular graph and $w \in \left[\frac{1}{2},1\right)$, then every initial configuration is eventually stable in the $w$-power index process.
\end{corollary}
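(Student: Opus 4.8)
The plan is to deduce the corollary directly from Theorem \ref{thm:LastTwo} by showing that in each of the three cases the win partition of $G$ has (at most) two nonempty parts, so that every $w \in [\frac{1}{2},1)$ automatically lies in $\alpha_{k-1}\cup\alpha_k$.

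First I would handle the cycle. Every vertex $v$ of a cycle satisfies $|N[v]| = 3$, so $S_v = \{\tfrac{2}{3}, 1\}$ and hence $S_G = \{\tfrac{2}{3}, 1\}$. The win partition is therefore $\{[\tfrac{1}{2}, \tfrac{2}{3}), [\tfrac{2}{3}, 1)\}$, a partition with $k = 2$ parts, and since $\alpha_{k-1}\cup\alpha_k = [\tfrac{1}{2},1)$, Theorem \ref{thm:LastTwo} applies to every $w$. The $3$-regular case is essentially identical: every vertex has $|N[v]| = 4$, so $S_v = \{\tfrac{2}{4},\tfrac{3}{4},1\} = \{\tfrac{1}{2},\tfrac{3}{4},1\}$, giving $S_G = \{\tfrac{1}{2},\tfrac{3}{4},1\}$ and nonempty win-partition parts $[\tfrac{1}{2},\tfrac{3}{4})$ and $[\tfrac{3}{4},1)$; again these are the last two parts, so Theorem \ref{thm:LastTwo} finishes this case.

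For a path on at least two vertices, the internal vertices contribute $\{\tfrac{2}{3},1\}$ to $S_G$ and each of the two endpoints contributes $\{\tfrac{1}{2},1\}$, so $S_G \subseteq \{\tfrac{1}{2},\tfrac{2}{3},1\}$ and the nonempty parts of the win partition are $[\tfrac{1}{2},\tfrac{2}{3})$ and $[\tfrac{2}{3},1)$ (reducing to the single part $[\tfrac{1}{2},1)$ when $G = P_2$); the single-vertex path is trivially stable. In every case every $w\in[\tfrac{1}{2},1)$ lies in one of the last two parts of the win partition, so Theorem \ref{thm:LastTwo} applies and the proof is complete.

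There is no substantial obstacle here; the only point requiring a little care is the bookkeeping at the bottom of the win partition. Because $\tfrac{1}{2}$ itself may be an element of $S_G$, the formal definition of the win partition yields a degenerate leading interval $[\tfrac{1}{2},\tfrac{1}{2})=\emptyset$, and one should remark that discarding it does not change which parts count as the ``last two''. Equivalently, I would avoid the issue entirely by observing directly that $[\tfrac{1}{2},1)$ is covered by the two intervals determined by the unique element of $S_G$ strictly between $\tfrac{1}{2}$ and $1$, and then cite Theorem \ref{thm:LastTwo}.
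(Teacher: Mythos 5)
Your proof is correct and follows the same route as the paper: you verify that the win partition of a cycle, path, or $3$-regular graph has at most two (nonempty) parts and then invoke Theorem \ref{thm:LastTwo}. The paper states this in one line without the explicit computation of $S_G$; your added care about the degenerate interval $\left[\tfrac{1}{2},\tfrac{1}{2}\right)$ when $\tfrac{1}{2}\in S_G$ is a reasonable refinement but not a different argument.
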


\begin{proof}
	If $G$ is a cycle, path or a $3$-regular graph then the win partition of $G$ contains no more than two parts. 
	The result now follows directly from Theorem  \ref{thm:LastTwo}.
\end{proof}

Though the $w$-power index process on cycles and $3$-regular graphs cannot result in a cycle for any initial configuration. The same is not true $k$-regular graphs for $k > 3$.

\begin{theorem}
	Let $\alpha_1, \alpha_2, \dots \alpha_j$ be the win partition of a $k$-regular graph. 
	For every $k > 3$ and every $w \notin \alpha_{k-1} \cup \alpha_k$ there exists a $k$-regular graph $G$, and an initial configuration $C_0$ on $G$ such that $C_0$ is eventually periodic.
\end{theorem}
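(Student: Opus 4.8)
The plan is to produce, for each admissible $(k,w)$, a single explicit example and verify it by hand: take $G$ to be the Cartesian product of the complete graph $K_{k-1}$ with the $4$-cycle $C_4$, and let $C_0$ assign the strategy $D$ to every vertex of one $K_{k-1}$-layer (``column'') and $C$ to every vertex of the other three columns. Here $G$ is $k$-regular, since each vertex has $k-2$ neighbours inside its own column and $2$ neighbours along the $C_4$-direction, and the construction does not depend on $w$ within the allowed range. Morally this is a ``regularised'' version of the two-triangle gadget of Figure~\ref{fig:cyclExample}: one column plays the role of the oscillating vertex and its two $C_4$-neighbours supply the two competing monochromatic pressures.

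First I would translate the hypothesis on $w$. For a $k$-regular graph the win partition depends only on $k$, its set $S_G$ has largest three elements $1,\frac{k}{k+1},\frac{k-1}{k+1}$, and so $w\notin\alpha_{j-1}\cup\alpha_j$ (the two highest parts) is exactly $w<\frac{k-1}{k+1}$; writing $t$ for the least integer with $t>w(k+1)$, this is exactly $3\le t\le k-1$, an interval that is nonempty precisely because $k>3$. Consequently, in the $w$-power index process on any $k$-regular graph a collaborator $v$ has positive power $\frac1{|N_C[v]|}$ iff $|N_C[v]|\ge t$, and a defector $v$ has positive power $\frac1{|N_D[v]|}$ iff $|N_C[v]|\le t-1$. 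The role of the bounds $3\le t\le k-1$ is that a full column of $k-1$ same-strategy vertices already forces a ``winning'' collaborator count on its own side, while a vertex with at most $2$ collaborators in its closed neighbourhood is always on the ``losing'' side for collaborators; both of these situations arise for the chosen $C_0$.

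Next I would set up the reduction to a one-dimensional cellular automaton. If a configuration is constant on each column, with column $i$ carrying strategy $s_i$, then the closed neighbourhood of a vertex of column $i$ is all $k-1$ vertices of column $i$ together with the two corresponding vertices of columns $i\pm1$; hence every vertex of column $i$ has the same collaborator- and defector-neighbourhood sizes, the same power, and sees the same multiset of (power, strategy) pairs, so the successor configuration is again constant on each column. By the paragraph above, the power of column $i$ equals $\frac{1}{(k-1)+n_i}$ where $n_i=\#\{\ell\in\{i-1,i+1\}:s_\ell=s_i\}\in\{0,1,2\}$, so it lies in $\{\frac1{k-1},\frac1k,\frac1{k+1}\}$ and is strictly largest exactly when $s_i$ disagrees with both of its $C_4$-neighbours. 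I would then trace the orbit of $(s_0,s_1,s_2,s_3)=(D,C,C,C)$: the column powers are $(\frac1{k-1},\frac1k,\frac1{k+1},\frac1k)$, so columns $0,1,3$ each copy the unique maximiser in their closed neighbourhood, namely column $0$ with strategy $D$, while column $2$ lies between two columns tied at the maximum power and both carrying $C$, so it takes $C$; the successor is $(D,D,C,D)$. From $(D,D,C,D)$ the powers are $(\frac1{k+1},\frac1k,\frac1{k-1},\frac1k)$, so columns $1,2,3$ each copy column $2$ (strategy $C$) while column $0$ lies between two columns tied at the maximum and carrying $D$, so it takes $D$; the successor is $(D,C,C,C)$. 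Thus $C_2=C_0\neq C_1$, the process is periodic with period $2$, and $C_0$ is eventually periodic; the statement for the given $w$ then follows (one could invoke Theorem~\ref{thm:PartitionWorks}, though in fact only $3\le t\le k-1$ was used).

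The main obstacle is really the opening choice of $G$; after that the argument is short. Within the write-up the step demanding the most care is the cellular-automaton reduction together with the claim that, under $3\le t\le k-1$, every relevant count is a ``winning'' one regardless of the neighbouring strategies (so that the column power is genuinely $\frac{1}{(k-1)+n_i}$ in all cases), and then the two orbit transitions must be checked with the strict inequalities that are needed to override the ``retain on a tie'' rule; everything else is immediate.
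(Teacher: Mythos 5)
Your proof is correct and is essentially the paper's own argument: the paper also takes $K_{k-1}\square C_4$ with a column-constant configuration distinguishing one column, computes the three column powers $\tfrac{1}{k-1},\tfrac1k,\tfrac1{k+1}$, and checks that two update steps return to $C_0$, giving period $2$. The only differences are cosmetic — you swap the roles of $C$ and $D$ in $C_0$ and handle general $w<\tfrac{k-1}{k+1}$ explicitly via the threshold $t$, where the paper is terser on that point.
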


\begin{proof}
	Let $G = K_{j-1} \square C_4$ and $w = \frac{1}{2}$. 
	Label the four induced copies of $K_{j-1}$ as  $G_1$, $G_2$, $G_3$, and $G_4$ so that vertices in $G_i$ are adjacent to those in $G_{i+1 \pmod 4}$. 
	Define $C_0$ so that each vertex of $G_1$ is a collaborator and all of other vertices are defectors. 
	If $w \notin \alpha_{k-1} \cup \alpha_k$, then $w < \frac{k-1}{k+1}$. 
	Therefore

	\[ p_{0}(v) = \begin{cases} 
	\frac{1}{j-1}, & v \in G_1\\
	\frac{1}{j}. & v \in G_2 \cup  G_4    \\
	\frac{1}{j+1}, & v \in G_3\\
	\end{cases} \] 
	
	Since $\frac{1}{j} < \frac{1}{j-1}$, each vertex of $G_2$ and $G_3$ will change their strategy to be a collaborator in $C_1$. 
	Therefore 			
	
	\[ p_{1}(v) = \begin{cases} 
	\frac{1}{j+1}, & v \in G_1\\
	\frac{1}{j}. & v \in G_2 \cup  G_4    \\
	\frac{1}{j-1}, & v \in G_3\\
	\end{cases} \] 
	
	It follows then that each vertex of $G_2$ and $G_3$ will change their strategy to be a collaborator in $C_2$. 
	The conclusion follows by observing that $C_0 = C_2$, and thus the process is periodic with period $2$.
\end{proof}

\section{Periodicity Results} \label{sec:stable}
In this section we highlight the full range of possibilities for the outcome of the $\frac{1}{2}$-power index process. 
In particular, we show that for $w = \frac{1}{2}$ there exist graphs and initial configurations that result in arbitrarily long cycles. 

Consider the graph $G_n = C_n \square P_2$, with vertex set $\{v_{i,j} \; : \; 1 \leq i \leq n, 1 \leq j \leq 2 \}$.
We obtain $H_n$ from $G_n$ by attaching a pendant edge with new vertex $z_{i,j}$ to each vertex $v_{i,j}$ of $G_n$, and identifying the vertex at the end of each pendant edge with a vertex in a copy of $K_3$ with vertices $x_{i,j},y_{i,j},z_{i,j}$. 
Figure \ref{8wave} gives $H_8$.

\begin{figure}
\[
\begin{scriptsize}
\begin{tikzpicture}
[scale = 1]
\tikzset{vertex/.style = {shape=circle,draw}}

\foreach \s in {0,...,7}
{
   \node[vertex] (\s0) at (\s,0) {};
   \node[vertex] (\s1) at (\s,1) {};
   
   \node[vertex] (t\s01) at (\s,-1) {};
   \node[vertex] (t\s02) at (\s - 0.2,-1.3) {};
   \node[vertex] (t\s03) at (\s + 0.2,-1.3) {};
   
   \node[vertex] (t\s11) at (\s,2) {};
   \node[vertex] (t\s12) at (\s - 0.2,2.3) {};
   \node[vertex] (t\s13) at (\s + 0.2,2.3) {};
   
   \draw (\s0) -- (\s1);
   
   \draw (t\s01) -- (t\s02) -- (t\s03) -- (t\s01);
   \draw (t\s11) -- (t\s12) -- (t\s13) -- (t\s11);
   
   \draw (\s0) -- (t\s01);
   \draw (\s1) -- (t\s11);
}

\draw (00) -- (10) -- (20) -- (30) -- (40) -- (50) -- (60) -- (70);
\draw (01) -- (11) -- (21) -- (31) -- (41) -- (51) -- (61) -- (71);

\draw (00) to[bend left=10] (70);
\draw (01) to[bend right=10] (71);

\end{tikzpicture}
\end{scriptsize}
\]
	\caption{The Graph $H_n$}
	\label{8wave}	
\end{figure}
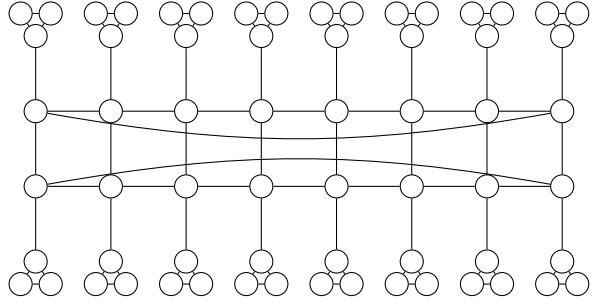

We define the configuration $W$, where for all $1 \leq i \leq n$
\begin{itemize}
	\item $W(v_{i,1})= W(x_{i,1}) = W(y_{i,1}) = W(z_{i,1}) = C$; and
	\item $W(v_{i,2})= W(x_{i,2}) = W(y_{i,2}) = W(z_{i,2}) = D$.
\end{itemize}
 Figure \ref{8wavecoloured} shows $W$ on $H_8$. 

\begin{figure}
\[
\begin{scriptsize}
\begin{tikzpicture}
[scale = 1]
\tikzset{vertex/.style = {shape=circle,draw}}

\foreach \s in {0,...,7}
{
   \node[vertex,fill=black] (\s0) at (\s,0) {};
   \node[vertex] (\s1) at (\s,1) {};
   
   \node[vertex,fill=black] (t\s01) at (\s,-1) {};
   \node[vertex,fill=black] (t\s02) at (\s - 0.2,-1.3) {};
   \node[vertex,fill=black] (t\s03) at (\s + 0.2,-1.3) {};
   
   \node[vertex] (t\s11) at (\s,2) {};
   \node[vertex] (t\s12) at (\s - 0.2,2.3) {};
   \node[vertex] (t\s13) at (\s + 0.2,2.3) {};
   
   \draw (\s0) -- (\s1);
   
   \draw (t\s01) -- (t\s02) -- (t\s03) -- (t\s01);
   \draw (t\s11) -- (t\s12) -- (t\s13) -- (t\s11);
   
   \draw (\s0) -- (t\s01);
   \draw (\s1) -- (t\s11);
}

\draw (00) -- (10) -- (20) -- (30) -- (40) -- (50) -- (60) -- (70);
\draw (01) -- (11) -- (21) -- (31) -- (41) -- (51) -- (61) -- (71);

\draw (00) to[bend left=10] (70);
\draw (01) to[bend right=10] (71);

\end{tikzpicture}
\end{scriptsize}
\]
	\caption{Configuration $W$ on $H_n$}
	\label{8wavecoloured}	
\end{figure}
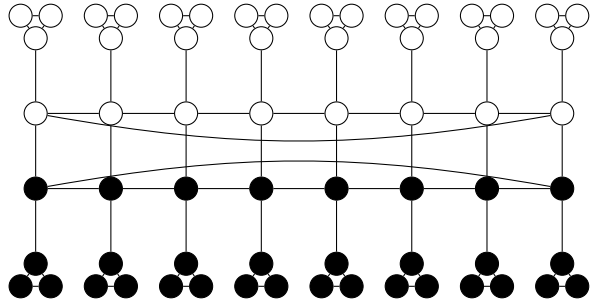

By inspection, such a configuration is stable in the $\frac{1}{2}$-power index process. 
We note that in the $\frac{1}{2}$-power index process on $H_n$, since each vertex is of degree three or degree four the outcome of the process is unchanged with respect to periodicity if the roles of collaborator and defector are swapped. 

From $W$ we define a family of configurations called \emph{wave configurations}. 
Let $n = 2k$ for some integer $k > 0$.
A \emph{wave configuration of length $n$} is a configuration of $H_n$ formed from $W$ by changing the strategies of any subset, $I$, of vertices from one of the copies of $C_n$, such that if  $v_{i_1,j}, v_{i_2,j} \in I$ then $|i_1 - i_2| \equiv 0 \mod 2$. 
We call such vertices \emph{interrupters}. 
Figure \ref{fig:waveConfigs} shows wave configurations on $H_8$.

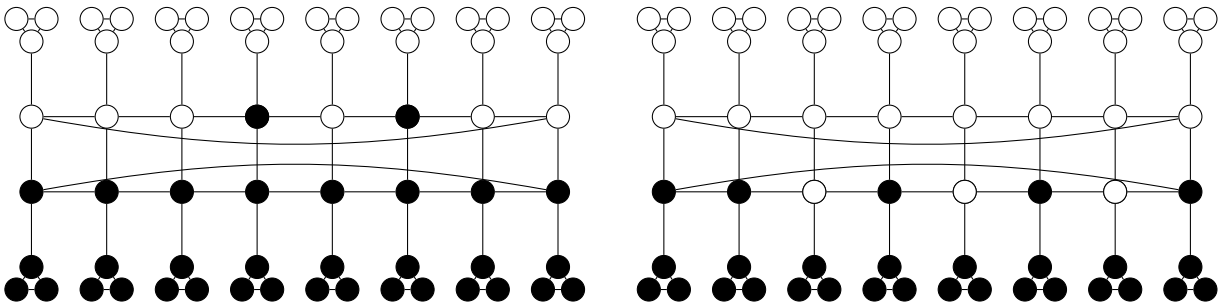
\begin{figure}
\[
\begin{scriptsize}
\begin{tikzpicture}
[scale = 1]
\tikzset{vertex/.style = {shape=circle,draw}}

\foreach \s in {0,...,7}
{
   \node[vertex,fill=black] (\s0) at (\s,0) {};
   \node[vertex] (\s1) at (\s,1) {};
   
   \node[vertex,fill=black] (t\s01) at (\s,-1) {};
   \node[vertex,fill=black] (t\s02) at (\s - 0.2,-1.3) {};
   \node[vertex,fill=black] (t\s03) at (\s + 0.2,-1.3) {};
   
   \node[vertex] (t\s11) at (\s,2) {};
   \node[vertex] (t\s12) at (\s - 0.2,2.3) {};
   \node[vertex] (t\s13) at (\s + 0.2,2.3) {};
   
   \draw (\s0) -- (\s1);
   
   \draw (t\s01) -- (t\s02) -- (t\s03) -- (t\s01);
   \draw (t\s11) -- (t\s12) -- (t\s13) -- (t\s11);
   
   \draw (\s0) -- (t\s01);
   \draw (\s1) -- (t\s11);
}

\node[vertex,fill=black] (31) at (3,1) {};\node[vertex,fill=black] (51) at (5,1) {};

\draw (00) -- (10) -- (20) -- (30) -- (40) -- (50) -- (60) -- (70);
\draw (01) -- (11) -- (21) -- (31) -- (41) -- (51) -- (61) -- (71);

\draw (00) to[bend left=10] (70);
\draw (01) to[bend right=10] (71);

\end{tikzpicture}
\end{scriptsize}
\ \ \ \ \
\begin{scriptsize}
\begin{tikzpicture}
[scale = 1]
\tikzset{vertex/.style = {shape=circle,draw}}

\foreach \s in {0,...,7}
{
   \node[vertex,fill=black] (\s0) at (\s,0) {};
   \node[vertex] (\s1) at (\s,1) {};
   
   \node[vertex,fill=black] (t\s01) at (\s,-1) {};
   \node[vertex,fill=black] (t\s02) at (\s - 0.2,-1.3) {};
   \node[vertex,fill=black] (t\s03) at (\s + 0.2,-1.3) {};
   
   \node[vertex] (t\s11) at (\s,2) {};
   \node[vertex] (t\s12) at (\s - 0.2,2.3) {};
   \node[vertex] (t\s13) at (\s + 0.2,2.3) {};
   
   \draw (\s0) -- (\s1);
   
   \draw (t\s01) -- (t\s02) -- (t\s03) -- (t\s01);
   \draw (t\s11) -- (t\s12) -- (t\s13) -- (t\s11);
   
   \draw (\s0) -- (t\s01);
   \draw (\s1) -- (t\s11);
}

\node[vertex,fill=white] (20) at (2,0) {};
\node[vertex,fill=white] (40) at (4,0) {};
\node[vertex,fill=white] (60) at (6,0) {};

\draw (00) -- (10) -- (20) -- (30) -- (40) -- (50) -- (60) -- (70);
\draw (01) -- (11) -- (21) -- (31) -- (41) -- (51) -- (61) -- (71);

\draw (00) to[bend left=10] (70);
\draw (01) to[bend right=10] (71);

\end{tikzpicture}
\end{scriptsize}
\]
	\caption{$C$ and $D$-wave configurations}
	\label{fig:waveConfigs}	
\end{figure}
If the interrupters in a wave configuration are collaborators, we will refer to this configuration as a $C$\emph{-wave configuration}. Analogously we define  $D$\emph{-wave configuration}. 
Note that for any wave configuration, if $u$ is an interrupter then $p(u) = 0$. 

\begin{lemma}\label{lem:CtoD}
	If $C_k$ is a $C$-wave configuration of length $n$ with interrupter set $I$, then in the $\frac{1}{2}$-power index process we have that $C_{k+1}$ is a $D$-wave configuration of length $n$ with interrupter set 
	
	\[ I^\prime = \{v_{1,j}\; | \; \mbox{ exactly one of } v_{j+1,2} \mbox{ and  } v_{j-1,2} \mbox{ is contained in } I, 1 \leq j \leq n
 \}\]
\end{lemma}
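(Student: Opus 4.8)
The plan is to prove the lemma by a direct computation: evaluate every vertex's power under $C_k$, then apply the update rule one vertex-class at a time. Write $I\subseteq\{v_{i,2}:1\le i\le n\}$ for the interrupter set of the $C$-wave configuration $C_k$, so that $C_k$ agrees with $W$ except that $C_k(v_{i,2})=C$ whenever $v_{i,2}\in I$, and for each $i$ set $c_i=[v_{i,2}\in I]$, $a_i=[v_{i-1,2}\in I]$, $b_i=[v_{i+1,2}\in I]$ (indices mod $n$). The parity hypothesis says $I$ meets only one residue class mod $2$, so $c_i=1$ forces $a_i=b_i=0$. The useful observation to record at the outset is that the only vertices whose strategy differs from $W$ lie on the layer‑$2$ cycle, so the triangle and stem vertices "see" almost the same neighbourhood as under $W$.

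First I would tabulate the powers in $C_k$. The triangle vertices $x_{i,j},y_{i,j}$ have a monochromatic closed neighbourhood of size $3$, so $p_k(x_{i,j})=p_k(y_{i,j})=\tfrac13$; similarly $p_k(z_{i,1})=\tfrac14$ and $p_k(z_{i,2})=\tfrac14$, except that $p_k(z_{i,2})=\tfrac13$ when $c_i=1$. For the layer‑$1$ cycle vertex $v_{i,1}$ the closed neighbourhood has $4+c_i$ collaborators out of $5$, giving $p_k(v_{i,1})=\tfrac14$ if $c_i=0$ and $\tfrac15$ if $c_i=1$. For $v_{i,2}$ the closed neighbourhood has $1+a_i+b_i+c_i$ collaborators out of $5$; checking the four possible pairs $(c_i,\,a_i+b_i)$ against $w=\tfrac12$ yields $p_k(v_{i,2})=\tfrac13$ exactly when $c_i=0$ and $a_i+b_i=1$, $p_k(v_{i,2})=\tfrac14$ exactly when $c_i=a_i=b_i=0$, and $p_k(v_{i,2})=0$ otherwise (either $v_{i,2}$ is an interrupter on the losing side, or both cycle‑neighbours are interrupters so the vote passes for collaborators and the defector $v_{i,2}$ has power $0$).

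Next I would run the update. For every triangle and stem vertex the maximum power in its closed neighbourhood is $\tfrac13$ and is attained only by vertices carrying the $W$-strategy (the triangle partners for $x,y$; the partners $x,y$ for $z$, noting $v_{i,2}$ can tie $\tfrac13$ only when it is itself a defector), so these vertices keep their $W$-strategies. For $v_{i,2}$: its only defector neighbour $z_{i,2}$ has power $\ge\tfrac14$, while every collaborator neighbour has power $\le\tfrac14$ (namely $v_{i,1}$, plus any interrupter neighbours, which have power $0$); hence the maximum over $N[v_{i,2}]$ is either strictly above $\tfrac14$, attained only by defectors, or equal to $\tfrac14$, attained by the defector $z_{i,2}$ and possibly the collaborator $v_{i,1}$ — hence ambiguous. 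In both cases $C_{k+1}(v_{i,2})=D$. For $v_{i,1}$: its only defector neighbour is $v_{i,2}$ and $p_k(z_{i,1})=\tfrac14$ always, so $v_{i,1}$ flips to $D$ iff $p_k(v_{i,2})>\tfrac14$, i.e. iff $p_k(v_{i,2})=\tfrac13$, i.e. iff $c_i=0$ and $a_i+b_i=1$; when $p_k(v_{i,2})=\tfrac14$ this forces $a_i=b_i=c_i=0$ and then $z_{i,1}$ (a collaborator) also attains the maximum $\tfrac14$, so $v_{i,1}$ retains $C$, and when $p_k(v_{i,2})=0$ the collaborator $z_{i,1}$ attains the maximum alone (or tied with $v_{i,1}$), so again $v_{i,1}$ keeps $C$.

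Assembling, $C_{k+1}$ agrees with $W$ off the layer‑$1$ cycle, and on it the defectors are exactly the $v_{i,1}$ with "$c_i=0$ and $a_i+b_i=1$". Because the parity hypothesis already forces $c_i=0$ whenever $a_i+b_i=1$, this set coincides with the $I'$ in the statement; and since each such index $i$ is a cycle‑neighbour of an index in $I$'s residue class, $I'$ lies entirely in the complementary residue class, so it is a legitimate interrupter set. Therefore $C_{k+1}$ is a $D$-wave configuration of length $n$ with interrupter set $I'$. I expect the main obstacle to be purely bookkeeping — correctly tracking when the local vote flips which case of the power formula applies — together with the two borderline ties at power $\tfrac14$, which are precisely what stop the wave from smearing; getting those ties right, and noticing that the parity hypothesis makes the $c_i=0$ clause in $I'$ automatic, is the crux of the argument.
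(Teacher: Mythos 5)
Your proof is correct and follows essentially the same route as the paper's: a case analysis by vertex type (triangle/stem vertices, $v_{i,2}$, $v_{i,1}$), computing each vertex's power in the $C$-wave configuration and applying the update rule, with the ties at power $\tfrac14$ preventing spurious flips. If anything your bookkeeping with the indicators $a_i,b_i,c_i$ is more explicit than the paper's argument, so no gap to report.
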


\begin{proof}
	Let $C_k$ be a $C$-wave configuration of length $n$.
	Note that in any $C$-wave configuration of length $n$, we have that $p(u) \leq \frac{1}{3}$ for all $u \in V(H_n)$.
	
	If $C_k = W$, the claim holds as $I = \emptyset$, $I^\prime = \emptyset$ and $C_{i+1} = W$.

	So we may assume there is at least one interrupter. 
	We examine the behaviour of each vertex $v$ by considering its position in $H_n$.
	
	\emph{Case I: $v$ is in a copy of  $K_3$}.
	If $v$ is adjacent to a vertex with opposing strategy, that neighbouring vertex must be an interrupter. Recall that if $u$ is an interrupter then $p(u) = 0$. And so $C_k(v) = C_{k+1}(v)$, as $v$ has no neighbours with opposing strategies and with strictly greater power.
	
	\emph{Case II: $v = v_{i,1}$}.
	By hypothesis, $C_k$ is a $C$-wave configuration.
	If $v$ is adjacent to an interrupter, then every vertex in $v$'s neighbourhood is a collaborator. 
	As such $C_k(v) = C_{k+1}(v)$.
	
	Assume now that $v$ is adjacent to no interrupter.
	If $u$ is a neighbour of $v$ and $C_k(v) = C$, then $p(u) \leq \frac{1}{4}$. 
	Note that this bound is achieved with equality if $u$ is contained a copy of $K_3$.
	If $u$ is a neighbour of $v$ and $C_k(v)=D$, then $v = v_{i,2}$.
	Observe that $p(v)$ depends on the number of interrupters that are adjacent to $u$.
	If $v$ has no neighbours that are interrupters, then each of its neighbours are defectors. 
	And so $p(v_{i,2}) = \frac{1}{4}$.
	In this case we see $C_k(v) = C_{k+1}(v)$.
	If $v$ has a single neighbour that is an interrupter, then $p(v_{i,2}) = \frac{1}{3}$.
	In this case we see $C_{k+1}(v) = D$.
	Finally, if $v$ has two neighbours that are interrupters, then $p(v_{i,2}) = 0$.
	In this case we see $C_k(v) = C_{k+1}(v)$.

	\emph{Case III: $v = v_{i,2}$}
	If $v$ is an interrupter, then $p(v) = 0$ and $C_{k+1}  = D$.
	Assume now that $v$ is not an interrupter. 
	Note that $v$ is adjacent to a vertex $u$ contained in a copy of $K_3$.
	By Case I, $p(u) = \frac{1}{3}$.
	Since $\frac{1}{3}$ is the maximum value taken by $p$ in $C_k$, we observe that $C_k(v) = C_{k+1}(v) = D$

	Therefore, the only vertices to change strategies are interrupters, and vertices $v_{i,1}$ where exactly one of $v_{i-1,2}$ and $v_{i+1,2}$ is an interrupter.
\end{proof}
	
\begin{lemma}
		If $C_i$ is a $D$ wave configuration of length $n$, then $C_{i+1}$ in the $\frac{1}{2}$-power index process is a $C$-wave configuration of length $n$.
\end{lemma}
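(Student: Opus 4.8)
The plan is to deduce this from its mirror image, Lemma~\ref{lem:CtoD}, by exploiting a symmetry of the $\tfrac12$-power index process on $H_n$. Let $\phi$ be the automorphism of $H_n$ that interchanges the two induced copies of $C_n$, so $\phi(v_{i,1})=v_{i,2}$, $\phi(x_{i,1})=x_{i,2}$, $\phi(y_{i,1})=y_{i,2}$, $\phi(z_{i,1})=z_{i,2}$ and conversely, and let $\tau$ be the operation on configurations that swaps the strategies $C$ and $D$ at every vertex. I would set $\Phi=\tau\circ\phi$ and establish three facts: (i) $\Phi$ preserves the class of wave configurations, fixes $W$, and interchanges $C$-wave and $D$-wave configurations, because $W\circ\phi=\tau\circ W$ and $\phi$ leaves the first index of every vertex unchanged, hence preserves the parity condition defining an interrupter set; (ii) $\Phi$ commutes with one step of the process on every wave configuration; and (iii) $\Phi^2=\mathrm{id}$. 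Granting these, if $C_i$ is a $D$-wave configuration then $\Phi(C_i)$ is a $C$-wave configuration, Lemma~\ref{lem:CtoD} makes one step from $\Phi(C_i)$ a $D$-wave configuration, and since one step from $\Phi(C_i)$ equals $\Phi(C_{i+1})$ by (ii), applying $\Phi$ once more and using (i) and (iii) gives that $C_{i+1}$ is a $C$-wave configuration.

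The step deserving real care is (ii), since $\tau$ is \emph{not} a symmetry of the $\tfrac12$-power index process on arbitrary configurations: at a vertex of degree three a $2$--$2$ split leaves the defecting side with power $\tfrac12$ but the collaborating side with power $0$. So I would check that no ambiguous $2$--$2$ closed neighbourhood ever arises along the orbit of wave configurations. The only degree-three vertices of $H_n$ are the $z_{i,j}$, with closed neighbourhood $\{z_{i,j},x_{i,j},y_{i,j},v_{i,j}\}$; in any wave configuration (and in any $D$-wave configuration obtained after one step) the three triangle vertices $x_{i,j},y_{i,j},z_{i,j}$ all retain their strategy from $W$, so this set contains at least three vertices of one strategy and is never $2$--$2$. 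Hence on wave configurations $\tau$, and therefore $\Phi$, does commute with a single step, which is exactly what (ii) requires.

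Facts (i) and (iii) are immediate from the definitions once $W\circ\phi=\tau\circ W$ is noted, so I do not expect them to present any difficulty. If one wishes, tracking the interrupter set through $\Phi$ and Lemma~\ref{lem:CtoD} yields that the $C$-wave configuration $C_{i+1}$ has interrupter set $\{\,v_{j,2}\ :\ \text{exactly one of } v_{j-1,1},\,v_{j+1,1}\ \text{lies in } I\,\}$, the same Rule~90--type transformation, though this is not needed for the statement as phrased. An alternative, fully self-contained route is to discard the symmetry argument and simply repeat the Case~I--III analysis in the proof of Lemma~\ref{lem:CtoD} with the roles of $C$ and $D$ exchanged; the only substantive point either way is the failure of the naive $C\leftrightarrow D$ symmetry at $2$--$2$ ties.
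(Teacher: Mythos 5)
Your proposal is correct and takes essentially the same route as the paper, which likewise deduces this lemma from Lemma~\ref{lem:CtoD} by exchanging the roles of collaborator and defector (combined with the symmetry of $H_n$ swapping the two copies of $C_n$). Your write-up is in fact more careful than the paper's one-line justification, since you explicitly verify that the only potential failure of the $C\leftrightarrow D$ swap symmetry at $w=\tfrac12$---a $2$--$2$ split in the closed neighbourhood of a degree-three vertex $z_{i,j}$---never arises in a wave configuration.
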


\begin{proof}
	This proof follows similarly to the proof of Lemma \ref{lem:CtoD} by observing that in the $\frac{1}{2}$-power index process since each vertex of $H_n$ is of degree three or degree four the evolution of the process is unchanged by exchanging the roles of collaborator and defector.
\end{proof}

In the proof of Lemma \ref{lem:CtoD} we observe a set of rules that govern the generation of the subsequent wave configuration from the previous wave configuration. 
In particular, vertex $v_{i,1}$ is an interrupter at time $t+1$ if and only if exactly one of $v_{i-1,2}$ and $v_{i+1,2}$ is an interrupters at time $t$.

Recall that a 1D cylindrical cellular automaton of length $n$ is a deterministic discrete-time process consisting of $n$ cells: $c_0,c_1,\dots, c_{n-1}$.
At time $t$, each of the cells is either \emph{live} or \emph{dead}. 
The state of each cell $c_i$ at time $t= k+1$ depends only on the states of cells $c_{i-1},c_i$ and $c_{i+1}$ at time $t=k$. 
Here and further, we assume that addition and subtraction in subscripts of cell states is performed modulo $n$.
We call the state at $t=0$ the \emph{seed} of the cellular automaton.

Rule 90 is a 1D cylindrical cellular automaton of length $n$ whose evolution is  governed by the following rule: cell $c_i$ is live at time $t = k+1$ if and only if at most one of cell  $c_{i-1}$ and $c_{i+1}$ is live at time $t = k$ \cite{W84}.
By considering interrupters as live cells in a 1D cylindrical cellular automaton, we find that the evolution of a $C$-wave configuration may be modelled using Rule 90.

\begin{theorem}\label{thm:STTheorem}
Let $C_0$ be a C-wave configuration of length $n$ so that vertex $v_{0,2}$ is the only interrupter. Configuration $C_k$ in the $\frac{1}{2}$-power index process has an interrupter in column $i$ if and only if cell $i$ is live at time $t=k$ in the cylindrical cellular automaton of length $n$ governed by Rule 90 seeded with a single live cell in $c_{0}$.
\end{theorem}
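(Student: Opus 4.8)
The plan is to run a single induction on $k$, using Lemma \ref{lem:CtoD} and its companion to follow the columns that carry interrupters and to compare them, step for step, with the live cells of the length-$n$ Rule 90 automaton.

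First I would repackage the one-step dynamics of the wave configurations into a form that does not depend on the parity of $k$. Given a wave configuration $C_k$ of length $n$, let $A_k \subseteq \mathbb{Z}_n$ be the set of column indices $i$ for which $C_k$ has an interrupter in column $i$ (this interrupter is $v_{i,2}$ when $C_k$ is a $C$-wave and $v_{i,1}$ when it is a $D$-wave; only the index $i$ is recorded). Lemma \ref{lem:CtoD} says that if $C_k$ is a $C$-wave with interrupter columns $A_k$, then $C_{k+1}$ is a $D$-wave whose interrupter columns are exactly those $i$ for which precisely one of $i-1$ and $i+1$ lies in $A_k$; the companion lemma, together with the symmetry of the $\frac{1}{2}$-power index process on $H_n$ under exchanging collaborators and defectors (valid because every vertex of $H_n$ has degree three or four), yields the same rule for the transition from a $D$-wave to a $C$-wave. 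Setting
\[
\Phi(S) \;=\; \bigl\{\, i \in \mathbb{Z}_n \;:\; \bigl|\{\,i-1,\,i+1\,\}\cap S\bigr| = 1 \,\bigr\},
\]
we therefore obtain $A_{k+1} = \Phi(A_k)$ for every $k \geq 0$. I would also record the short parity check that keeps us inside the class to which the lemmas apply: since $n$ is even, the parity of a column index is well defined on $\mathbb{Z}_n$, and if all elements of a set $S$ share one parity then all elements of $\Phi(S)$ share the opposite parity; hence each $C_k$ is genuinely a wave configuration, the interrupter-parity hypothesis in the definition of a wave configuration being preserved (and flipped) at every step.

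Next, writing $B_k \subseteq \mathbb{Z}_n$ for the set of live cells at time $k$ of the length-$n$ Rule 90 automaton, the update rule of Rule 90 --- cell $i$ is live at time $k+1$ if and only if exactly one of cells $i-1$ and $i+1$ is live at time $k$ --- is literally the identity $B_{k+1} = \Phi(B_k)$. The induction is then immediate. For $k = 0$ the hypothesis that $v_{0,2}$ is the unique interrupter of $C_0$, and the hypothesis that the automaton is seeded with a single live cell at $c_0$, give $A_0 = \{0\} = B_0$; and if $A_k = B_k$ then $A_{k+1} = \Phi(A_k) = \Phi(B_k) = B_{k+1}$. Therefore $A_k = B_k$ for all $k$, which is exactly the statement that $C_k$ has an interrupter in column $i$ if and only if cell $i$ is live at time $t = k$.

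The one place that needs genuine care is the first paragraph: Lemma \ref{lem:CtoD} is stated only for the step from a $C$-wave to a $D$-wave, so I must check that the companion lemma produces exactly the same column map $\Phi$ for the reverse step, so that a single uniform recursion $A_{k+1} = \Phi(A_k)$ drives the entire process; this is where the collaborator/defector symmetry of $H_n$ does the real work. Once that, together with the parity bookkeeping confirming that we never leave the family of wave configurations, is in place, the correspondence with Rule 90 and the induction are routine.
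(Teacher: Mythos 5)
Your proposal is correct and takes essentially the same route as the paper: the paper's proof likewise reads off from Lemma \ref{lem:CtoD} (and its $D$-to-$C$ companion, via the collaborator/defector symmetry of $H_n$) that the interrupter columns update by exactly the Rule 90 rule, and then identifies the two processes from the common single-cell seed. Your explicit column map $\Phi$, the induction on $k$, and the parity bookkeeping confirming the process never leaves the family of wave configurations merely make precise what the paper's terser proof leaves implicit.
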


\begin{proof}
	By Lemma \ref{lem:CtoD}, there is an interrupter in column $i$ in $C_k$ if and only if there is an interrupter in either column $i-1$ or column $i+1$ in $C_{k-1}$.
	There is an interrupter in column $i$ in $C_k$ if and only if  exactly one of cells $i-1$ and $i+1$ is live at time $t = k-1$ in  the cylindrical cellular automata of length $n$ governed by Rule 90 seeded with a single live cell.
\end{proof}

\begin{corollary}\label{cor:halfAllCycle}
	For every positive integer $k$ there exists a graph $G$ and an initial configuration $C_0$ such that the $\frac{1}{2}$-power index process eventually becomes periodic with period at least $k$.
\end{corollary}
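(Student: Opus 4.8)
The plan is to combine Theorem~\ref{thm:STTheorem} with a lower bound on the eventual period of the single-seed Rule~90 automaton on a cylinder. We may assume $k \ge 2$. Set $n = 2k+1$, take $G = H_n$, and let $C_0$ be the $C$-wave configuration of length $n$ whose only interrupter is $v_{0,2}$. Since $H_n$ is a fixed finite graph, the $\frac12$-power index process on $C_0$ is iteration of a fixed map on a finite set and hence is eventually periodic; by Lemma~\ref{lem:CtoD} and its counterpart for $D$-wave configurations every $C_t$ is a wave configuration, and by Theorem~\ref{thm:STTheorem} the set of interrupter columns of $C_t$ is precisely the set of live cells, at time $t$, of the length-$n$ Rule~90 automaton seeded with a single live cell in $c_0$. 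A wave configuration is determined by its interrupter set, so the eventual period of the process on $H_n$ equals the eventual period of that Rule~90 orbit; it therefore suffices to show the latter is at least $k$.

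To bound the Rule~90 eventual period from below I would use two facts. First (finite propagation speed): on the infinite line, the single-seed evolution at time $t$ has live cells exactly at those positions $j$ with $|j| \le t$, $j \equiv t \pmod 2$, and $\binom{t}{(t-j)/2}$ odd; since $\binom{t}{0} = \binom{t}{t} = 1$, positions $\pm t$ are always live, so the support has span exactly $2t$ and the line states at distinct times are pairwise distinct. For $0 \le t \le (n-1)/2$ this support fits inside a window of length at most $n$, so no wraparound has occurred and the cylinder state at time $t$ agrees, cell for cell, with the line state; hence the configurations $C_0, C_1, \dots, C_{(n-1)/2}$ of the Rule~90 orbit are pairwise distinct. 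Consequently the orbit visits at least $\tfrac{n+1}{2}$ distinct configurations, so (transient length) $+$ (period) $\ge \tfrac{n+1}{2}$.

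Second, I would bound the transient. Because $n$ is odd, $x^n - 1$ is squarefree over $\mathbb{F}_2$, so $R := \mathbb{F}_2[x]/(x^n-1)$ is a product of finite fields, and the Rule~90 transition is multiplication by $u = x + x^{-1} = x^{-1}(x+1)^2$. In each field factor $u$ acts either as zero or invertibly, and it acts as zero only on the factor $\mathbb{F}_2[x]/(x+1)$ (an irreducible dividing $(x+1)^2$ must equal $x+1$); thus $R \cong R_0 \times R_1$ with the transition zero on $R_0$ and invertible on $R_1$, so after one step every orbit lies in $\{0\} \times R_1$ and the transient is at most $1$. Therefore the eventual period of the single-seed Rule~90 orbit is at least $\tfrac{n+1}{2} - 1 = k$, which completes the proof. (One could equally cite the known cycle-length results for Rule~90 in place of this paragraph.)

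The step I expect to be the main obstacle is precisely this last control of the transient. The light-cone argument on its own only shows that the first $\approx n/2$ states of the orbit are distinct, which bounds (transient) $+$ (period) rather than the period alone; for even $n$ --- for instance $n$ a power of $2$, where the single-seed orbit collapses to the all-dead configuration --- all of that slack is consumed by the transient and the period is $1$. Restricting to odd $n$ and using the semisimplicity of $R$ to force the transient down to a single step is what turns ``many distinct early states'' into a genuine lower bound on the period.
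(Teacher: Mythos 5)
Your algebraic analysis of single-seed Rule 90 on an odd cycle (squarefree $x^n-1$, transition given by multiplication by $x+x^{-1}$, transient at most one, light-cone count of distinct states) is correct, but the transfer back to the graph process has a genuine gap: the wave-configuration machinery you invoke does not apply to odd $n$. Wave configurations are defined in the paper only for $n=2k$, and the same-parity condition on interrupter columns is not a cosmetic restriction --- it is exactly what guarantees that no two interrupters are ever adjacent, which in turn is what makes every interrupter have power $0$, the fact on which Lemma~\ref{lem:CtoD} (and hence Theorem~\ref{thm:STTheorem}) rests. With $n=2k+1$ and a single seed, the two fronts of the light cone sit at columns $t$ and $n-t$, which become adjacent at time $t=(n-1)/2=k$. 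At that moment each of the two adjacent interrupters has three collaborators among the five vertices of its closed neighbourhood, hence power $\frac{1}{3}$ rather than $0$; one then checks that under the $\frac{1}{2}$-power index update the adjacent pair retains its strategy (a tie at power $\frac{1}{3}$ between neighbours of opposite strategies) and no new interrupter is created in the other row above those columns, whereas Rule 90 would demand one. So the process stops tracking Rule 90 precisely at the time horizon where your count of distinct states ends, and your assertion that the eventual period of the process on $H_n$ equals the eventual period of the Rule-90 orbit is unsupported (and, given the computation above, should not be expected to hold) for odd $n$. Nothing about the long-run period of the graph process is therefore established.

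The paper avoids this by staying with even length: it takes $n=2^k+2$ and simply cites that single-seed Rule 90 at that length has period $2^{k-1}$, then applies Theorem~\ref{thm:STTheorem}; your attempt to prove the period bound from scratch is more self-contained than that citation, and the tension is only that your transient argument wants $n$ odd while the graph side needs $n$ even. The repair is to take $n=2m$ with $m$ odd and $m\geq k+1$: evenness keeps every configuration in the orbit a wave configuration (a single seed forces all interrupters into one row at columns of a common parity, so they are never adjacent), while over $\mathbb{F}_2$ one has $x^{2m}-1=(x^m-1)^2$, so $\mathbb{F}_2[x]/(x^{2m}-1)$ splits into factors $\mathbb{F}_2[x]/(f^2)$ with $f$ irreducible dividing $x^m-1$; the element $x+x^{-1}=x^{-1}(x+1)^2$ is a unit in every factor except the one with $f=x+1$, where it is zero, so the transient is again at most one step, and your light-cone argument (valid for $t\leq m-1$) yields period at least $m-1\geq k$. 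With that change of length your argument goes through and gives a legitimately different, self-contained proof of the corollary.
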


\begin{proof}
One may confirm that Rule 90 of length $n = 2^k +2$ seeded with a single index turned on has period $2^{k-1}$.
The result follows by Theorem \ref{thm:STTheorem}.
\end{proof}

Corollary \ref{cor:halfAllCycle} may be extended so that $G$ contains as an induced subgraph any graph $S$.

\begin{corollary}\label{cor:halfAllCycleH}
	For every positive integer $k$ and any graph $S$, there exists a graph $G$ that has $S$ as an induced subgraph, and an initial configuration $C_0$ such that the $\frac{1}{2}$-power index process eventually becomes periodic with period at least $k$.
\end{corollary}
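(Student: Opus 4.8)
The plan is to reuse the Rule~90 ``engine'' of Corollary~\ref{cor:halfAllCycle} and to graft $S$ onto a region of it that stays monochromatic, so that the grafting cannot disturb the simulation. Fix $n=2^k+2$ and let $H_n$, together with the single-interrupter $C$-wave configuration $C_0$, be exactly as in the proof of Corollary~\ref{cor:halfAllCycle}; then the $\tfrac{1}{2}$-power index process on $(H_n,C_0)$ is eventually periodic with period $2^{k-1}\ge k$. Form $G$ from the disjoint union of $H_n$ and a copy of $S$ by picking one vertex $s^\ast$ of $S$ and adding the single edge $s^\ast x_{1,1}$, where $x_{1,1}$ is a degree-two vertex of one of the attached triangles of $H_n$; let $C_0'$ agree with $C_0$ on $H_n$ and make every vertex of $S$ a collaborator. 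Since no edge is added inside $V(S)$, the subgraph of $G$ induced on $V(S)$ is exactly $S$, as required.

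I would then establish, by a joint induction on the round number, that (i)~every vertex of $S$ is a collaborator throughout the process on $(G,C_0')$, and (ii)~the restriction of that process to $V(H_n)$ agrees round-by-round with the process on $(H_n,C_0)$. For (i): at each round every vertex of $S$ has a closed neighbourhood consisting only of collaborators (for $s^\ast$ this uses that $x_{1,1}$ remains a collaborator, which is part of (ii)), so it has positive power and no neighbour of opposing strategy, hence cannot change strategy. For (ii): $x_{1,1}$ is the only vertex of $H_n$ whose closed neighbourhood is altered by the grafting, so at each round the only power that can differ from the ungrafted process is $p(x_{1,1})$, which is $\tfrac{1}{4}$ rather than $\tfrac{1}{3}$ while $x_{1,1}$ stays a collaborator with an all-collaborator neighbourhood (triangle vertices never change strategy, by Case~I of Lemma~\ref{lem:CtoD} and likewise in a $D$-wave step). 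Hence only the updates of the neighbours $y_{1,1},z_{1,1},s^\ast$ of $x_{1,1}$ could conceivably change, and each is checked directly: $y_{1,1}$ keeps its unchanged, all-collaborator closed neighbourhood with $p(y_{1,1})=\tfrac{1}{3}$; in $N[z_{1,1}]$ the collaborator $y_{1,1}$ still attains the maximum power $\tfrac{1}{3}$, its only possible competitors ($x_{1,1}$ and $z_{1,1}$ itself) being collaborators too and the remaining neighbour $v_{1,1}$ being always either a collaborator of power at most $\tfrac{1}{4}$ or a $D$-interrupter of power $0$; and $s^\ast$ is covered by (i). In every case the maximum-power vertex of the relevant closed neighbourhood is a collaborator both before and after grafting, so every update is identical; this gives (ii), and thus the $H_n$-part of $(G,C_0')$ performs exactly the Rule~90 evolution of Theorem~\ref{thm:STTheorem}.

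Combining (i) and (ii), the process on $(G,C_0')$ eventually becomes periodic, with the copy of $S$ frozen and the $H_n$-part periodic of period $2^{k-1}$, so the joint period is $2^{k-1}\ge k$; this proves the corollary.

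I expect the main obstacle to be exactly the bookkeeping in step (ii): one must be sure the single added edge leaves the delicate Rule~90 simulation untouched, which comes down to checking that near $x_{1,1}$ the ``winning'' strategy in every relevant closed neighbourhood is robustly collaborator because interrupters contribute power $0$ and the triangle region stays monochromatic. (If $G$ is not required to be connected there is essentially nothing to prove beyond Corollary~\ref{cor:halfAllCycle}: take $G=H_n\sqcup S$ with an arbitrary configuration on $S$; since $S$ is finite its component is eventually periodic with some period $p$, so the joint process is eventually periodic with period $\operatorname{lcm}(p,2^{k-1})\ge k$.)
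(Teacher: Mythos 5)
Your proposal is correct and is essentially the paper's own argument: graft $S$ by a single edge onto a degree-two triangle vertex of $H_n$, colour all of $S$ with that vertex's strategy, and verify locally that the only power altered by the grafting is that of the attachment vertex (dropping from $\tfrac{1}{3}$ to $\tfrac{1}{4}$), so the wave/Rule~90 evolution of Theorem~\ref{thm:STTheorem} and hence the long period are untouched while $S$ stays frozen. One small slip in your bookkeeping: in a $D$-wave step the collaborator $v_{1,1}$ can have power $\tfrac{1}{3}$ (when exactly one of $v_{0,1},v_{2,1}$ is a $D$-interrupter), not at most $\tfrac{1}{4}$, but this is harmless because $v_{1,1}$ is then a collaborator sharing $z_{1,1}$'s strategy, so the maximum power in $N[z_{1,1}]$ is still attained only by collaborators and the update is unchanged.
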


\begin{proof}
	Let $S$ be a graph and let $k$ be a positive integer.
	By Corollary \ref{cor:halfAllCycle}, there exists a graph $G^\star$ and an initial configuration $C_0$ such that the $\frac{1}{2}$-power index process eventually becomes periodic with period at least $k$.
	By the proof of Corollary \ref{cor:halfAllCycle} we may assume that $G^\star = H_n$ for some $n > 0$.
	Construct $G$ from $S \cup H_n$ by attaching any vertex $s$ in $S$ to any vertex $v$ of degree 2 in $H_n$. Now assign the usual configuration to $H_n$. Without loss of generality we may assume $C_0 (v) = C$. Now, assign $C_0(u) = C$ for all $u \in V(S)$. We must verify that $(a)$, $C_t(S) = C_0(S)$ for all $t$, and $(b)$, $H_n$ still behaves like the wave configuration. Notice that $(a)$ holds if $(b)$ does, so it is enough to show $(b)$. Observe the following subgraph, where $v$ and $s$ are defined above, and $i$ is an interrupter :
\[
\begin{tikzpicture}
[scale = 1]
\tikzset{vertex/.style = {shape=circle,draw}}

\node[vertex] (0) at (0,0) {};
\node (t0) at (-0.3,-0.3) {$v$};
\node[vertex] (1) at (1,0) {};
\node[vertex] (2) at (0.5,-0.71) {};

\node[vertex,fill=black] (3) at (0.5,-1.5) {};
\node (t3) at (0.1,-1.5) {$i$};

\node[vertex] (4) at (-0.5,0.5) {};
\node (t4) at (-0.9,0.5) {$s$};

\draw (0) -- (1) -- (2) -- (0);
\draw (2) -- (3);
\draw (0) -- (4);

\end{tikzpicture}
\]
Observe that the power  of $i$'s $C$ neighbour remains the same, meaning the interrupter will still change. 
Also, it's clear that none of the vertices other than $i$ will change. 

Therefore, $G$ contains $S$ as a subgraph, and we can find a configuration which is eventually periodic with arbitrarily long period. 
\end{proof}

The construction of $H_n$ can be extended to provide a result analogous to Corollary \ref{cor:halfAllCycleH} for the $w$-power index game for any $w \in \left[\frac{1}{2},1\right)$. 
For $\ell \geq  3$ let $H_{n,\ell}$ be constructed as $H_n$ where $K_3$ is replaced with $K_\ell$. 
Using $H_{n,\ell}$ one may prove the following result:

\begin{theorem}
	For every positive integer $k$ and any graph $S$, there exists a graph $G$ that has $S$ as an induced subgraph, and an initial configuration $C_0$ such that the $w$-power index process eventually becomes periodic with period at least $k$.
\end{theorem}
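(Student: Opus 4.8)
The plan is to replay the entire $w=\tfrac12$ development of Section~\ref{sec:stable} on the inflated graph $H_{n,\ell}$, with an inflated analogue of the configuration $W$, choosing the clique size $\ell=\ell(w)$ large enough that the win condition $w$ plays exactly the role that $\tfrac12$ played on $H_n$. First, fix $w\in[\tfrac12,1)$ and note a necessary constraint: by Theorem~\ref{thm:LastTwo}, if $w$ lies in one of the top two parts of the win partition of $H_{n,\ell}$ then every configuration stabilises, so no periodicity is possible. Since $\Delta(H_{n,\ell})=\ell$ once $\ell\ge 4$, I would take $\ell$ with $\ell>\frac{1+w}{1-w}$, so that $w<\frac{\ell-1}{\ell+1}$. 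Then I would redefine the ``background'' configuration $W$ on $H_{n,\ell}$ so that one of the two copies of $C_n$, together with every vertex of the $K_\ell$'s hanging off it, is a collaborator and everything else is a defector; the first check is that this $W$ is stable in the $w$-power index process, which is a finite computation in the valences $\{4,\ell-1,\ell\}$.

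Next I would reprove the two wave lemmas. A wave configuration and an interrupter are defined exactly as before (interrupters confined to one copy of $C_n$, with pairwise even column differences). The heart of the matter is the analogue of Lemma~\ref{lem:CtoD}: one must show that (i) every interrupter still has power $0$, (ii) every $K_\ell$-vertex and every non-interrupter retains its strategy, and (iii) a vertex $v_{i,1}$ flips precisely when exactly one of $v_{i-1,2},v_{i+1,2}$ is an interrupter. Each of these reduces to comparing a handful of neighbourhood ratios $\frac{|N_C[v]|}{|N[v]|}$ against $w$, and then comparing the resulting powers $\frac1{|N_C[v]|}$ or $\frac1{|N_D[v]|}$ against the largest power available at an opposite-strategy neighbour. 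The role of $\ell$ is precisely to push these ratios and powers onto the correct side of $w$; in particular the configuration with two interrupters two columns apart must still leave the intervening vertex with a collaborator-ratio exceeding $w$ (hence power $0$), which is exactly the place the inflation is needed, since with $\Delta=4$ at the $v_{i,j}$'s the relevant ratio is frozen at $\tfrac35$. If the substitution $K_3\mapsto K_\ell$ alone does not force (i)--(iii), I would additionally inflate the skeleton $C_n\square P_2$ --- e.g.\ replacing each pendant edge by a clique attached as a dominating set, with sizes chosen separately for the two copies of $C_n$ --- so that all three statements hold; once the $C$-wave$\,\to\,D$-wave step is verified, the $D$-wave$\,\to\,C$-wave step follows from the same computation with the roles of the two copies of $C_n$ interchanged.

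This case analysis is the main obstacle. Unlike the $w=\tfrac12$ case, where every comparison collapsed to ``is the collaborator fraction more than half'', here one genuinely has to track the exact valences introduced by the inflation and verify each inequality against $w$, and it is conceivable that for $w$ near $1$ a still more elaborate skeleton is required; getting a single family $H_{n,\ell}$ that works uniformly for all $w\in[\tfrac12,1)$ is the delicate point.

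With the wave lemmas in hand the rest is automatic. Exactly as in Theorem~\ref{thm:STTheorem}, seeding $H_{n,\ell}$ with a wave configuration whose only interrupter sits in column $0$ makes the sequence of interrupter-columns evolve as the cylindrical Rule~90 automaton of length $n$ started from a single live cell; taking $n=2^{k}+2$ gives period $2^{k-1}\ge k$, proving the periodicity statement with $G=H_{n,\ell}$. Finally, to incorporate an arbitrary graph $S$ I would repeat the argument of Corollary~\ref{cor:halfAllCycleH}: glue any vertex of $S$ to a vertex of $H_{n,\ell}$ of smallest valence (a clique vertex not adjacent to any $v_{i,j}$), declare all of $S$ collaborators, and check that (a) this attachment point and all of $S$ are frozen --- their collaborator ratios remain above $w$, so their strategies never change --- and (b) the single extra edge perturbs no neighbourhood ratio relevant to the wave dynamics, so $H_{n,\ell}$ still runs Rule~90. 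Then $G=S\cup H_{n,\ell}$ with this gluing has $S$ as an induced subgraph and the required period at least $k$.
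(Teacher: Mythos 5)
Your plan follows the same route the paper itself sketches (the paper omits the proof and simply asserts that $H_{n,\ell}$, obtained by replacing $K_3$ with $K_\ell$, works), but as a proof it has a genuine gap, and its main-line parameter choice demonstrably fails. Inflating the pendant cliques does nothing to the skeleton vertices $v_{i,j}$, which keep degree $4$, so the ratios $\frac{|N_C[v_{i,j}]|}{|N[v_{i,j}]|}$ are frozen at multiples of $\frac{1}{5}$ no matter how large you take $\ell$. Concretely, for $w\ge\frac{3}{5}$ the Rule~90 correspondence of Lemma~\ref{lem:CtoD} breaks: a non-interrupter $v_{i,2}$ flanked by two interrupters has collaborator ratio exactly $\frac{3}{5}\le w$, hence power $\frac{1}{|N_D[v_{i,2}]|}=\frac{1}{2}$, so the vertex $v_{i,1}$ above it (power $\frac{1}{4}$) flips even though under Rule~90 column $i$ must stay dead; the induced column dynamics becomes ``live iff at least one neighbour is live,'' which from a single seed on an even cycle settles into period $2$. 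Worse, in the $D$-wave an interrupter $v_{i,1}$ also has collaborator ratio $\frac{3}{5}\le w$ and hence power $\frac{1}{2}$ rather than $0$, so it converts $z_{i,1}$ (power at most $\frac{1}{\ell}$) and its row neighbours, and the defectors invade the collaborator cliques, destroying the wave structure altogether. For $w\ge\frac{4}{5}$ even your ``first check'' fails: in the background $W$ each $v_{i,1}$ has ratio exactly $\frac{4}{5}$, hence power $0$, while its defector neighbour $v_{i,2}$ has power $\frac{1}{4}>0$, so $W$ is not stable. Your condition $\ell>\frac{1+w}{1-w}$ addresses only the clique valences and none of these comparisons.

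You do flag exactly this obstruction (the frozen $\frac{3}{5}$ at the degree-$4$ vertices) and offer, as a contingency, to inflate the skeleton as well; but that contingency is where the entire content of the theorem lies for $w\ge\frac{3}{5}$. One must exhibit an explicit $w$-dependent blow-up of $C_n\square P_2$ (so that the valences of the cycle vertices, not just the clique sizes, scale with $w$), a stable analogue of $W$, and a full re-run of the wave-lemma case analysis recovering the exact XOR rule, including the tie-breaking step that keeps a doubly-flanked background vertex from flipping. None of this is constructed or verified in your proposal, so the statement is not proved for $w\ge\frac{3}{5}$; for $w$ in the part of the win partition containing $\frac{1}{2}$ (on $H_n$ this is $\left[\frac{1}{2},\frac{3}{5}\right)$) the claim already follows from Corollary~\ref{cor:halfAllCycleH} together with Theorem~\ref{thm:PartitionWorks}, so that is the only regime your argument actually covers. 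The closing gluing step mirroring Corollary~\ref{cor:halfAllCycleH} is fine once the wave lemmas are in place, but it does not repair the missing core.
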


\section{Discussion}

Many open areas of study concerning the power index process on graphs remain. 
It is clear that for a fixed win condition, both the initial condition and the particular topology of the graph impact the limiting behaviour of the process. This is unsurprising, as it is a feature of many discrete-time processes on graphs, especially those with connections to cellular automata. 
That any induced subgraph may appear on a graph for which a configuration exists with arbitrarily long period length suggests that structural results beyond what is presented in Section \ref{sec:prelim} are unlikely.

Here the study of the process is motivated by the aim of discovering the full range of possible behaviours for the process.  
As the spread of attitudes is of particular interest on small-world networks these families of graphs are excellent choices in the continuing study of the power index process.

In many of the results presented herein, graphs and initial configurations are carefully crafted to achieve a desired property within in the process. 
The completely deterministic behaviour of the process makes such constructions possible. 
Recent work in the spread of influence of graphs with probabilistic dynamics (see \cite{C05}, \cite{D17}) gives a reasonable path for future work on a probabilistic version, including a probabilistic updating scheme and a random initial assignment of strategies. 

\bibliographystyle{abbrv}
\bibliography{references.bib}

\end{document}